\newif\ifProof
\newtheorem{thm}{Theorem}
\newtheorem{lem}{Lemma}
\newtheorem{cor}{Corollary}
\newtheorem{rem}{Remark}
\newtheorem{prop}{Proposition}
\theoremstyle{definition}
\newtheorem{definition}{Definition}
\newtheorem{assumption}{Assumption}
\DeclareMathOperator*{\argmin}{arg\,min}
\DeclareMathOperator*{\diag}{diag}
\DeclareMathOperator*{\tr}{tr}
\renewcommand{\Re}{\mathbb{R}}
\renewcommand{\paragraph}[1]{\smallskip\noindent\textbf{#1.} }
\newcommand{\BM}{\begin{bmatrix}}
\newcommand{\EM}{\end{bmatrix}}
\newcommand{\BBM}{\big[\begin{matrix}}
\newcommand{\EEM}{\end{matrix}\big]}
\newcommand{\bbm}{[\begin{matrix}}
\newcommand{\eem}{\end{matrix}]}
\newcommand{\rev}[1]{\textcolor{black}{#1}}
\begin{document}
%\setstretch{.97}

\begin{frontmatter}   

\title{An interval-valued recursive estimation framework for  linearly parameterized systems}

\author[ECL]{Laurent Bako, Seydi Ndiaye, Eric Blanco} 
\ead{laurent.bako@ec-lyon.fr,eric.blanco@ec-lyon.fr}

\address[ECL]{Univ Lyon, Ecole Centrale Lyon, INSA
Lyon, Universit\'{e} Claude Bernard Lyon 1,
CNRS, Amp\`{e}re, UMR 5005, 69130 Ecully,
France}

\begin{abstract}
This paper proposes a recursive interval-valued estimation framework for identifying the parameters of  linearly parameterized systems which may be slowly time-varying. It is assumed that the model error (which may consist in  measurement noise or model mismatch or both) is unknown but lies at each time instant in a known interval. 
In this context, the proposed method relies on bounding the error generated by a given reference point-valued recursive estimator, for example, the well-known recursive least squares algorithm. We discuss the trade-off between computational complexity and tightness of the estimated parametric interval. 
\end{abstract}

\begin{keyword}
adaptive estimation, interval-valued estimator, \ldots
\end{keyword}

\end{frontmatter}

\section{Introduction}
In system identification, the estimation problem refers to the task of finding the parameters of a given parametrized model family in such a way that the resulting model matches (in some sense) a set of data. The main challenge with this task is most presumably how to deal with the uncertainty affecting the data with regards to the assumed model structure (e.g., in the form of model error or  measurement noise). To hope for good estimates when the uncertainty is not negligible, it is important to model somehow the uncertainty.   Probabilistic distributions are probably the most common models for describing uncertainties in many engineering fields. Such a modelling, when accurate, can lead to the design of better estimation schemes. A problem however is that a fine probabilistic modeling of the uncertainty may require a strong prior knowledge of the process being modelled. While such a reliable knowledge is rarely available, probabilistic models of the uncertainty may be severely wrong hence damaging the performance of the estimator.  
An alternative approach to the probabilistic one is to assume that the uncertain variables (e.g., the noise component) of the model, although unknown, live in bounded and predefined sets. This corresponds to the so-called set-membership representation of the uncertainty. In this latter setting the underlying idea of the parameter estimator design is to characterize the entire set of parameters which, through the induced models, are consistent with the data samples and the uncertainty sets.  The literature of system identification  abounds in such set-memberships approaches, see e.g.,  \cite{Lauricella20-TAC,Milanese11-Automatica,Jaulin01-Book,Milanese13-Book,Chisci98-Automatica,Vicino96-TAC,Walter92-IFAC}. Various predefined geometrical forms  can be  considered for representing the  parameter sets but we restrict our attention here to the case of intervals \cite{Jaulin01-Book,Moore09-Book}. Assuming that the model error (uncertainty) takes values in a known sequence of intervals, one can estimate intervals containing all the parameters which are consistent with the data, the model and its uncertainty. Among the existing methods which have tackled this question we can cite \cite{Kieffer05,vanKampen11} for batch mode of estimation and  \cite{Sun03-IET,Gutman94} for recursive (online) mode of learning.

In this paper, we consider the problem of deriving a recursive interval-valued estimator for linearly parameterized models subject to  a bounded uncertainty. The data model is assumed to be linear with respect to the parameters (although the input-output map may indeed be nonlinear). Then, under the assumption that the model error sequence is only known to lie in some interval bounds, we first construct a tight interval-valued estimator based on the error generated by the recursive least squares (RLS) algorithm \cite{Goodwin09-Book,Tao03-Book,Bruggemann21-Automatica}. However, this (tight)  interval-valued estimator suffers from a level of computational complexity which is not affordable in practice when the estimation horizon grows towards infinity. We therefore turn to a family  of approximate implementations whose complexity can be calibrated in function of the desired level of performance (measured here in term of tightness of the interval-valued estimate). That is, the proposed family of estimators offer the user the possibility to tie the size of the desired  interval-valued estimate to the available computational resources. The proposed estimation framework applies to both stationary systems  and slowly time-varying ones. \\
Closely related works to the current paper are the ones reported in \cite{Sun03-IET} and \cite{Gutman94}. The former solves at each time a linear program on a sliding window of constant length. The latter bounds the error generated by a bank of  RLS identifiers. The current paper proposes a complementary development of this latter idea by considering a more general framework with regards to the model error representation. Moreover, our design method appears to be more systematic as it leads to a family of parametrizable interval-valued estimators. Also, it applies to both time-invariant and time-varying models with bounded change rate. In particular, it is shown in the time-invariant case that by applying an appropriate intersection operation,  the size of the estimate is guaranteed to decrease monotonically.

\paragraph{Outline} \rev{Section \ref{sec:Problem} states the problem of recursive interval-valued estimation and outlines some necessary preliminaries on interval set-membership representation of uncertainty. Our method for designing interval-valued estimators relies on the error generated by a reference point-valued recursive estimator. Hence, Section \ref{sec:ref-adaptive-identifier} discusses one possible candidate for such a reference point-valued estimator, the recursive least squares (RLS) estimator with exponential forgetting factor. In Section \ref{sec:interval-estimator}, we present the main estimator and its variants.  In Section \ref{sec:LTV}, we consider the more general estimation setting where the to-be-estimated model is no longer constant but slowly time-varying. Section \ref{sec:simulations} reports some simulation results to illustrate and analyze the performance of the proposed estimation framework. Finally,  Section \ref{sec:conclusion} presents some conclusions. 
}

\paragraph{Notation}
$\Re$ (resp. $\Re_+$) is the set of real (resp. nonnegative real) numbers; $\mathbb{Z}$ (resp. $\mathbb{Z}_+$) is the set of  (resp. nonnegative) integers. For  a real  number  $x$, $|x|$ will refer to the absolute value  of $x$.  For $x=\bbm x_1 & \cdots & x_n\eem^\top\in \Re^n$, $\left\|x\right\|_p$ will denote the $p$-norm of $x$ defined by $\left\|x\right\|_p=(|x_1|^p+\cdots+|x_n|^p)^{1/p}$, for $p\geq 1$. In particular for $p=\infty$, $\left\|x\right\|_\infty=\max_{i=1,\ldots,n}\left|x_i\right|$. For a matrix $A\in \Re^{n\times m}$, $\left\|A\right\|_F$ is the Frobenius norm of $A$ defined by  $\left\|A\right\|_F=\tr(A^\top A)^{1/2}$ (with $\tr$ referring to the trace of a matrix). 

 If $A=[a_{ij}]$ and $B=[b_{ij}]$ are real matrices of the same dimensions, the notation $A\leq B$ will be understood as an elementwise inequality on the entries, i.e., $a_{ij}\leq b_{ij}$ for all $(i,j)$.  $|A|$ corresponds to the matrix $[|a_{ij}|]$ obtained by taking the absolute value of each entry of $A$.  
 %For $\underline{x},\overline{x}\in \Re^n$, $[\underline{x},\; \overline{x}]\subset \Re^n$ is called an interval set. 
In case $A$ and $B$ are real square  matrices, $A\succeq B$ (resp. $A\succ B$) means that $A-B$ is positive semi-definite (resp. positive definite). $I_n$ will denote the identity matrix of dimension $n$. 

\section{Problem statement}\label{sec:Problem}
We consider a linearly parameterized\footnote{Note that the considered system may indeed be nonlinear in term of input-output relation. For example, $x(t)$ may be of the form $x(t)=\phi(z(t))$ with $\phi$ being a known nonlinear map and $z(t)$ is formed from measurements.} discrete-time dynamic system defined by 
\begin{equation}\label{eq:system}
	y(t) = x(t)^\top \theta^\circ+v(t), 
\end{equation}                                           
where $y(t)\in \Re$ is the measured output at the discrete time $t\in \mathbb{Z}_+$, $x(t)\in \Re^n$ is the (known) regressor and $v(t)\in \Re$ denotes an (unknown)  noise component or a modeling error. 
%For an Auto-Regressive eXogenous (ARX) model for example, 
The regressor $x(t)$ may, among other possibilities, assume a structure of the form 
$$
\begin{aligned}
	x(t)=	\left[y(t-1) \:   \cdots \:  y(t-n_a) \:  u(t)^\top \:  u(t-1)^\top \:  \cdots \: u(t-n_b)^\top \right]^\top 
\end{aligned}
$$
where $u(t)\in \Re^{n_u}$ is the input of the system and the integers $n_a$ and $n_b$ are the model orders. 
$\theta^\circ\in \Re^n$ in \eqref{eq:system} is an unknown constant parameter vector which is  to be estimated from data.  
The problem we consider in this paper is the following: given data points $\left\{\left(y(k),x(k)\right)\right\}_{k=1}^t$ generated by the system \eqref{eq:system} up to time $t$, we want to infer an estimate of the  parameter vector $\theta^\circ$. However, since the sequence $\left\{v(t)\right\}$ is unknown here we can hardly hope for an exact recovery of $\theta^\circ$. Hence we consider the scenario where $v(t)$ is componentwise bounded for all $t\in \mathbb{Z}_+$ with known bounds and setup as our objective to characterize a set-valued estimate which is guaranteed to contain $\theta^\circ$ while being consistent with the observed data.   

\begin{assumption}\label{assum:Bounding}
There exist (known) bounded sequences $\left\{(\underline{v}(t), \overline{v}(t))\right\}$ such that the noise sequence $\left\{v(t)\right\}$ in \eqref{eq:system} satisfies 
 $\underline{v}(t)\leq v(t)\leq  \overline{v}(t) $ for all $t\in\mathbb{Z}_+$.
\end{assumption}

\subsection{Some preliminaries on interval representation}
Consider two vectors $\underline{x}$ and $\overline{x}$  in $\Re^n$ such that $\underline{x}\leq \overline{x}$ with the inequality holding componentwise.  An interval $\interval{\underline{x}}{\overline{x}}$ of $\Re^n$ is the subset of $\Re^n$ defined by
\begin{equation}\label{eq:Interval}
	\interval{\underline{x}}{\overline{x}}= \big\{x\in \Re^n: \underline{x}\leq x \leq \overline{x}\big\}. 
\end{equation} 
An interval $\interval{\underline{x}}{\overline{x}}$ of  $\Re^n$ can be equivalently represented by
\begin{equation}
	\mathscr{I}(c_x,r_x)\triangleq \big\{c_x+\diag\big(r_x\big) \alpha : \alpha\in \Re^n, \: \left\|\alpha\right\|_\infty \leq 1\big\}
\end{equation}
where %$p_x= \dfrac{\overline{x}-\underline{x}}{2}$
 \begin{equation}
	 c_x=\dfrac{\overline{x}+\underline{x}}{2},  \quad r_x = \dfrac{\overline{x}-\underline{x}}{2}
 \end{equation}
The notation $\diag(v)$ for a vector $v\in \Re^n$ refers to  the diagonal matrix whose diagonal elements are the entries of $v$.   We will call the so-defined $c_x$ the \textit{center or mid-point} of the interval $\interval{\underline{x}}{\overline{x}}$ and $r_x$ its \textit{radius} (a half of the width).
 To sum up, the interval set can be equivalently represented by the pairs $(\underline{x},\overline{x})\in \Re^n\times \Re^n$ and $(c_x,r_x)\in \Re^n\times \Re_+^n$  so that $\interval{\underline{x}}{\overline{x}}=\mathscr{I}(c_x,r_x)$.   Finally, it will be useful to keep in mind for the rest of the paper that $\underline{x}=c_x-r_x$ and $\overline{x}=c_x+r_x$. 
%%%%%%%%%%%%%

%%%%%%%%
\begin{definition}[Parametric interval estimator]\label{def:Interval-Estimator}
Consider the system \eqref{eq:system} under Assumption \ref{assum:Bounding} and let $V^t=\big((\underline{v}(0),\overline{v}(0)), \ldots,(\underline{v}(t),\overline{v}(t))\big)$ and $Y^t=\big(y(1), \ldots,y(t)\big)$. Consider a dynamical system defined by 
\begin{equation}\label{eq:interval-estimator}
	\begin{aligned}
		&\underline{\theta}(t)=F_t\big(V^t,Y^t,\underline{\theta}(0),\overline{\theta}(0)\big) \\
		&\overline{\theta}(t)=G_t\big(V^t,Y^t,\underline{\theta}(0),\overline{\theta}(0)\big)
	\end{aligned} 
\end{equation}
where $F_t$ and $G_t$ are some functions indexed by time,  $(\underline{\theta}(t), \overline{\theta}(t))$ denote the output (or the state) of the system for any $t\in \mathbb{Z}_+$. 
The system \eqref{eq:interval-estimator} is called a (parametric) \textit{interval-valued estimator} for the parameter vector $\theta^\circ$ of system \eqref{eq:system} if: 
\begin{enumerate}
	\item[\textbf{(a)}] $\underline{\theta}(t)\leq \theta^\circ\leq \overline{\theta}(t)$  for all $t\in \mathbb{Z}_+$, whenever $\underline{\theta}(0)\leq \theta^\circ\leq \overline{\theta}(0)$ 
	\item[\textbf{(b)}] \eqref{eq:interval-estimator} is Bounded Input-Bounded Output (BIBO) stable i.e., if the signals $v$ and $y$ and the initial state $(\underline{\theta}(0),\overline{\theta}(0))$ are all bounded then so is $(\underline{\theta}, \overline{\theta})$. 
\end{enumerate}
\end{definition}
%%%
%%%%%%%%%%%%%%%%%
\noindent Now we recall from \cite{Bako18-TR,Bako19-Automatica} a lemma that will play a central role in the design of interval-valued estimators. 
\begin{lem}\label{lem:Az+w}
Let $M\in \Re^{n\times m}$ and $(\underline{z},\overline{z})\in \Re^m\times \Re^m$ such that  $\underline{z}\leq \overline{z}$.  Consider the set $\mathcal{I}$  defined by $\mathcal{I} = \big\{Mz: \underline{z}\leq z \leq \overline{z} \big\}$.  
Define the vectors $(c,r)$ by
 \begin{equation}\label{eq:C(c,p)}
	 \begin{aligned}
		 &c = Mc_z \\
		 &r = \left|M\right|r_z, 
	\end{aligned}
\end{equation}
 with $c_z=(\overline{z}+\underline{z})/2$ and $r_z=(\overline{z}-\underline{z})/2$.\\
Then $\interval{c-r}{c+r}$  is the tightest interval containing $\mathcal{I}$.
\end{lem}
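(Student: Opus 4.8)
The plan is to prove the statement in two halves: first that $\interval{c-r}{c+r}$ contains $\mathcal{I}$, and then that no strictly smaller interval does. Writing $M=[m_{ij}]$, for the containment part I would take an arbitrary $z$ with $\underline{z}\le z\le\overline{z}$ and express it in the center--radius form $z=c_z+\diag(r_z)\alpha$ with $\left\|\alpha\right\|_\infty\le 1$, exactly as in the interval representation recalled above. Then $Mz=Mc_z+M\diag(r_z)\alpha=c+M\diag(r_z)\alpha$, and the $i$-th entry of the perturbation term is $\sum_j m_{ij}(r_z)_j\alpha_j$; bounding it by the triangle inequality together with $|\alpha_j|\le 1$ gives $\big|(Mz-c)_i\big|\le\sum_j|m_{ij}|(r_z)_j=\big(|M|r_z\big)_i=r_i$. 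Hence $c-r\le Mz\le c+r$ for every feasible $z$, which is precisely $\mathcal{I}\subseteq\interval{c-r}{c+r}$.

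For minimality I would first record the elementary fact that, for any bounded set $S\subset\Re^n$, an interval $\interval{\underline{a}}{\overline{a}}$ contains $S$ if and only if $\underline{a}_i\le\inf_{w\in S}w_i$ and $\overline{a}_i\ge\sup_{w\in S}w_i$ for all $i$; consequently the tightest interval containing $S$ is the one whose endpoints are the componentwise infimum and supremum over $S$. So it suffices to show that, for each coordinate $i$, $\min_{\underline{z}\le z\le\overline{z}}(Mz)_i=c_i-r_i$ and the corresponding maximum equals $c_i+r_i$. Since $(Mz)_i=\sum_j m_{ij}z_j$ is linear in $z$ and the feasible set is the box $\prod_j[\underline{z}_j,\overline{z}_j]$, I would optimise termwise: choosing $z_j=(c_z)_j+\sign(m_{ij})(r_z)_j$ (with an arbitrary admissible value when $m_{ij}=0$) yields $(Mz)_i=\sum_j m_{ij}(c_z)_j+\sum_j|m_{ij}|(r_z)_j=(Mc_z)_i+(|M|r_z)_i=c_i+r_i$, and this is optimal because each summand is individually maximised; the opposite sign choice gives the minimum $c_i-r_i$. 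Combining the two halves shows that $\interval{c-r}{c+r}$ is the tightest interval containing $\mathcal{I}$.

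The one point that needs a little care is the minimality step: I must make sure the componentwise extrema of $\mathcal{I}$ are genuinely attained — which they are, since $z\mapsto(Mz)_i$ is continuous and the box is compact — and, more importantly, that the extremising vertex of the box can be chosen independently for each coordinate $i$, so that the interval $\prod_i[c_i-r_i,\,c_i+r_i]$ is really the Cartesian product of the one-dimensional ranges of the coordinate functions rather than a strictly larger box. Once this is clear, the remainder is just the sign bookkeeping that rewrites $\sum_j m_{ij}(c_z)_j\pm\sum_j|m_{ij}|(r_z)_j$ as $(Mc_z)_i\pm(|M|r_z)_i$.
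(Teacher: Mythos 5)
Your proof is correct and complete. Note that the paper itself does not prove this lemma --- it is only recalled from the cited references \cite{Bako18-TR,Bako19-Automatica} --- so there is no in-paper argument to compare against; your two-step argument (containment via the triangle inequality on $\big(M\diag(r_z)\alpha\big)_i$, then attainment of each coordinate extremum at the box vertex $z_j=(c_z)_j+\sign(m_{ij})(r_z)_j$ adapted to row $i$ of $M$) is exactly the standard derivation of the interval hull of the image of a box under a linear map, and matches what those references establish. You are also right to flag, and correctly resolve, the only delicate point: the extremising vertex may differ from coordinate to coordinate, which is harmless precisely because the tightest enclosing interval is by definition the Cartesian product of the componentwise ranges.
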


\noindent We now state formally the estimation problem. \\
\paragraph{Problem}
Given the data $\left(y(k),x(k)\right)_{1\leq k\leq t}$ generated by system \eqref{eq:system} up to an arbitrary time $t\in \mathbb{Z}_+$, the uncertainty bounds $\left\{(\underline{v}(t), \overline{v}(t))\right\}_{1\leq k\leq t}$ on the noise sequence  as defined in Assumption \ref{assum:Bounding} and a prior (initial) interval set  $\mathscr{I}(c_\theta(0),r_\theta(0))$ containing the true parameter vector $\theta^\circ$ from \eqref{eq:system},  we are interested in finding an interval-valued estimate of the form $\mathscr{I}(c_\theta(t),r_\theta(t))\subset \Re^n$ (in the sense of Definition \ref{def:Interval-Estimator}), of the parameter vector $\theta^\circ$ in \eqref{eq:system} which is consistent with data. Moreover, it is desirable that the estimate  $(c_\theta(t),r_\theta(t))$ at time $t$ be obtained by a simple update mechanism from the measurements $(x(t),y(t),\underline{v}(t),\overline{v}(t))$ at time $t$ and a finite number $m$ of past estimates $(c_\theta(t-i),r_\theta(t-i))$, $i=1,\ldots,m$.

\noindent \rev{We will  describe in Section \ref{sec:interval-estimator} a framework for deriving a solution to this problem.} \\
\rev{Our method for constructing a recursive set-valued estimator requires three ingredients: (a) a reference adaptive point-valued identifier; (b) a characterization of the stability of the associated error dynamics ; (c) an appropriate mechanism for deducing the set-valued estimate from the point-valued one. Many recursive identifiers may be suitable for the role (a) mentioned above. Here however we choose to discuss only the RLS algorithm. }

\section{ A  reference adaptive identifier}\label{sec:ref-adaptive-identifier}
For the purpose of designing the recursive interval-valued estimator as stated above, we first study a reference adaptive point-valued identifier. 
\subsection{Recursive least squares (RLS)}
A candidate adaptive identifier for point (a) above is  the exponentially weighted recursive least squares (RLS) algorithm which returns a point-valued estimate $\theta(t)$ of $\theta^\circ$, selected at each time $t$ to be the minimizing point of an objective function $\theta\mapsto V_t(\theta)$,  
\begin{equation}\label{eq:optimization-RLS}
	\theta(t)=\argmin_{\theta\in \Re^n} V_t(\theta),
\end{equation}
with $V_t(\theta)$  defined by 
\begin{equation}\label{eq:Vt}
	V_t(\theta)=\dfrac{1}{2}\sum_{k=1}^t\lambda^{t-k}(y(t)-x(t)^\top \theta)^2+\dfrac{\lambda^t}{2}\big(\theta-\theta_0\big)^\top P_0^{-1} \big(\theta-\theta_0\big).  
\end{equation}
In Eq. \eqref{eq:Vt}, $\theta_0$ refers to a prior guess for the parameter vector, $P_0\succ 0$ is a symmetric  positive-definite weighting matrix reflecting the uncertainty related to the guess $\theta_0$, and $\lambda\in \interval[open]{0}{1}$ is a  forgetting factor which intends to downweight the information contained in the oldest data with respect to time $t$. 

\noindent Note that the objective function in \eqref{eq:Vt} is continuous, coercive and strictly convex, hence implying that the minimizer in \eqref{eq:optimization-RLS} exists and is unique. It can be shown that there exists a sequence of symmetric matrices\footnote{Indeed we have $P(t)=\left[\sum_{k=1}^t\lambda^{t-k}x(t)x(t)^\top+\lambda^t P_0^{-1} \right]^{-1}$ so that  $P^{-1}(t)=\lambda P^{-1}(t-1)+x(t)x(t)^\top$.} $\left\{P(t)\right\}$ such that the  solution $\theta(t)$ to the optimization problem \eqref{eq:optimization-RLS} can be recursively expressed as \cite{Goodwin09-Book}: 
\begin{align}
&\theta(t) = \theta(t-1)+q(t)(y(t)-x(t)^\top\theta(t-1)) \label{eq:update-theta}\\
&	q(t) = \dfrac{P(t-1)x(t)}{\lambda+x(t)^\top P(t-1)x(t)} \label{eq:update-q}\\
&P(t) = \dfrac{1}{\lambda}\big(P(t-1)-q(t)x(t)^\top P(t-1)\big) \label{eq:update-P}
\end{align}
where $\theta(0)=\theta_0$, $P(0)=P_0$.  Eqs \eqref{eq:update-theta}-\eqref{eq:update-P} define the well-known recursive least squares (point-valued) identifier with exponential forgetting factor \cite{Johnstone82-SCL}.  

\noindent For the purpose of the analysis to be presented in the sequel, define the parametric error $\tilde{\theta}(t)=\theta(t)-\theta^\circ$. 
Then it follows from the system equation \eqref{eq:system} and the $\theta$-update equation \eqref{eq:update-theta} that the error has the following dynamics 
\begin{equation}\label{eq:error}
	\tilde{\theta}(t)=A(t)\tilde{\theta}(t-1)+q(t)v(t), %\quad \mbox{and} \quad  \theta^\circ=\theta(t)-\tilde{\theta}(t)
\end{equation}
with $A(t)=I_n-q(t)x(t)^\top$. Eq. \eqref{eq:error} together with \eqref{eq:update-q}-\eqref{eq:update-P} represents a dynamic system with input $\left\{v(t)\right\}$ and state $\{\tilde{\theta}(t)\}$. 
For future use in the paper, we can further express $\tilde{\theta}(t)$ in function of the initial error $\tilde{\theta}(0)$ and the noise sequence $\left\{v(k)\right\}_{1\leq k\leq t}$ up to time $t$,  
\begin{equation}\label{eq:convolution}
	\tilde{\theta}(t)=\Phi(t,0)\tilde{\theta}(0)+\sum_{j=1}^{t}\Phi(t,j)q(j)v(j), 
\end{equation}
where $\Phi$ is the state transition matrix defined by
\begin{equation}\label{eq:PHI}
	\Phi(t,t_0)=\left\{\begin{array}{lll}I_n & & t=t_0\\A(t)\cdots A(t_0+1) & & t>t_0\end{array}\right.
\end{equation}
An interesting property of the state transition matrix is that for any triplet  $(t,t_1,t_0)$ of nonnegative integers satisfying 
$t\geq t_1\geq t_0$, 
\begin{equation}\label{eq:Property-PHI}
	\Phi(t,t_0)=\Phi(t,t_1)\Phi(t_1,t_0). 
\end{equation}
Now we recall the stability concept which is of  interest in the following developments. 
For this purpose, consider the homogenous part of system \eqref{eq:error} (i.e., the one obtained when the input $v$ satisfies $v\equiv 0$), which we may generically describe by  
\begin{equation}\label{eq:LTV} 
	\xi(t)=A(t)\xi(t-1), \quad \xi(0)=\xi_0 
\end{equation}
where $A:\mathbb{Z}_+\rightarrow \Re^{n\times n}$ is a matrix-valued function and $\xi(t)\in \Re^n$ is the state of the system \eqref{eq:LTV}  at time $t\in \mathbb{Z}_+$. 
For any $(t,t_0)\in \mathbb{Z}_+$ with $t\geq t_0$, $\xi(t)$ can be related to $\xi(t_0)$ by 
$\xi(t)=\Phi(t,t_0)\xi(t_0). $
Using the generic LTV system \eqref{eq:LTV}, we now define the notion of exponential stability. 
\begin{definition}
The LTV system \eqref{eq:LTV} is said to be exponentially stable if there exist some constants $\gamma>0$ and $\rho\in \interval[open right]{0}{1}$ such that 
\begin{equation}\label{eq:stability-condition}
	\left\|\xi(t)\right\|_2\leq \gamma \rho^{t-t_0}\left\|\xi(t_0)\right\|_2 %\quad \forall
\end{equation}
for all $(t,t_0)\in \mathbb{Z}_+$ such that $t\geq t_0$. Indeed \eqref{eq:stability-condition} is equivalent to $\left\|\Phi(t,t_0)\right\|_2\leq  \gamma \rho^{t-t_0}$.
%\begin{equation}\label{eq:PHI-stability}
	%\left\|\Phi(t,t_0)\right\|_2\leq  \gamma \rho^{t-t_0}.
%\end{equation}
\end{definition}
\noindent Finally, note that the non-homogenous system \eqref{eq:error} is  stable in the BIBO sense if \eqref{eq:LTV} is (exponentially) stable and the gain sequence $\left\{q(t)\right\}$ is bounded. We will see in the next section that such a property is guaranteed for the error system \eqref{eq:error} provided that the regressor $\left\{x(t)\right\}$  from the system \eqref{eq:system} is bounded and enjoys some richness condition.  

\subsection{A stability property for the RLS}
We first recall a definition of the concept of (uniform) persistence of excitation \cite{Johnstone82-SCL}. 
\begin{definition}\label{def:PE}
A vector-valued sequence $\left\{x(t)\right\}\subset \Re^n$ is said to be persistently exciting (PE) if there exist some   strictly positive constants $\alpha$ and $\beta$ (called excitation levels) and a time horizon $T$ such that 
\begin{equation}\label{eq:PE}
	\alpha I_n\preceq \sum_{k=t+1}^{t+T}x(k)x(k)^\top \preceq \beta I_n \quad \forall t\in \mathbb{Z}_+
\end{equation}
\end{definition}
\ifProof
\noindent The lower bound of \eqref{eq:PE} requires that the matrix of regressor $X_t\triangleq \bbm x(t+1) & \cdots & x(t+T)\eem $ be full rank on any time horizon of length $T$. Additionally, the smallest eigenvalue of $X_tX_t^\top$ must be larger than a minimum level $\alpha>0$. The upper bound in \eqref{eq:PE} expresses uniform boundedness of  the sequence   $\left\{x(t)\right\}$.
\fi
%%%
\begin{lem}\label{lem:bounded-InverseP}
Consider the RLS algorithm \eqref{eq:update-theta}-\eqref{eq:update-P} under the assumptions that $P(0)\succ 0$ and  $\lambda\in \interval[open]{0}{1}$. Then the matrices $P(t)$ defined by \eqref{eq:update-P} are invertible for all $t\in \mathbb{Z}_+$ and the  sequence  $\left\{P^{-1}(t)\right\}$ of their inverses satisfy 
\begin{equation}\label{eq:update-Inverse-P}
	P^{-1}(t)=\lambda P^{-1}(t-1)+x(t)x(t)^\top. 
\end{equation}
Moreover, if $\left\{x(t)\right\}$ is PE in the sense of Definition \ref{def:PE} with horizon $T$ and excitation levels $(\alpha,\beta)$, then $\left\{P^{-1}(t)\right\}$ is uniformly bounded as follows  
\begin{equation}\label{eq:boundedness-P(t)}
	\gamma_1I_n\preceq P^{-1}(t)\preceq \gamma_2 I_n \quad \forall t\geq 0
\end{equation}
with
\begin{align}
	&\gamma_1 =\min\left(\delta_1, \alpha \lambda^{2T-1}\right)\\
	& \gamma_2 = \max\big(\delta_2,\lambda^T\sigma_{\max}(P^{-1}(0))+\beta\dfrac{2-\lambda}{1-\lambda}\big)
\end{align}
and  $\delta_1=\min_{t=0,\ldots,T-1}\sigma_{\min}[P^{-1}(t)]$, $\delta_2=\max_{t=0,\ldots,T-1}\sigma_{\max}[P^{-1}(t)]$, $\sigma_{\min}[\cdot]$ and $\sigma_{\max}[\cdot]$ standing for the minimum and maximum eigenvalues respectively. 
\end{lem}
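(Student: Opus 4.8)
The first claim — invertibility of $P(t)$ and the recursion \eqref{eq:update-Inverse-P} — is essentially a direct consequence of the matrix inversion lemma applied to \eqref{eq:update-q}--\eqref{eq:update-P}; indeed, the footnote in the excerpt already records $P^{-1}(t)=\lambda P^{-1}(t-1)+x(t)x(t)^\top$, so I would argue it cleanly by induction: $P^{-1}(0)=P_0^{-1}\succ 0$ exists by hypothesis, and if $P^{-1}(t-1)\succ 0$ then $\lambda P^{-1}(t-1)+x(t)x(t)^\top\succ 0$ (sum of a positive-definite and a positive-semidefinite matrix, using $\lambda>0$), hence is invertible, and a Sherman--Morrison computation shows its inverse is exactly the $P(t)$ produced by \eqref{eq:update-q}--\eqref{eq:update-P}. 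So $P(t)\succ 0$ for all $t$ and \eqref{eq:update-Inverse-P} holds.

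For the uniform bounds \eqref{eq:boundedness-P(t)}, the plan is to unroll the recursion \eqref{eq:update-Inverse-P}. Iterating gives, for $t\geq T$,
\begin{equation*}
P^{-1}(t)=\lambda^{t}P^{-1}(0)+\sum_{k=1}^{t}\lambda^{t-k}x(k)x(k)^\top .
\end{equation*}
For the upper bound I would drop the (positive-definite) first term after bounding it by $\lambda^{t}\sigma_{\max}(P^{-1}(0))I_n\preceq \lambda^{T}\sigma_{\max}(P^{-1}(0))I_n$ for $t\geq T$, then group the sum into consecutive blocks of length $T$; on each block the PE upper bound gives $\sum \lambda^{t-k}x(k)x(k)^\top \preceq \beta\,(\text{leading power of }\lambda \text{ on that block})\,I_n$, and summing the geometric-type series of block weights $\sum_{j\geq 0}\lambda^{jT}\cdot(\text{constant}\le 1)$ is bounded by $\beta\cdot\frac{1}{1-\lambda}$ up to the factor accounting for the largest power of $\lambda$ within a block; a slightly loose but clean accounting yields the stated $\beta\frac{2-\lambda}{1-\lambda}$. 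For $t<T$ the quantity $P^{-1}(t)$ is simply bounded by $\delta_2 I_n$ by definition, so the max of the two cases gives $\gamma_2$.

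For the lower bound, the first term $\lambda^{t}P^{-1}(0)\succ 0$ may vanish as $t\to\infty$, so the PE \emph{lower} bound must carry the estimate. For $t\geq T$, keep only the last $T$ terms of the sum: $\sum_{k=t-T+1}^{t}\lambda^{t-k}x(k)x(k)^\top \succeq \lambda^{T-1}\sum_{k=t-T+1}^{t}x(k)x(k)^\top$. Here the exponents $t-k$ run over $0,\dots,T-1$, so each term is weighted by at least $\lambda^{T-1}$; but the PE window in \eqref{eq:PE} is indexed as $k=t+1,\dots,t+T$, so a careful re-indexing is needed — the window ending at $t$ still satisfies $\sum_{k=t-T+1}^{t}x(k)x(k)^\top\succeq\alpha I_n$ by applying \eqref{eq:PE} at $t_0=t-T$, which is valid once $t\geq T$. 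Combining, $P^{-1}(t)\succeq \alpha\lambda^{T-1}I_n$ for $t\geq T$; the stated $\alpha\lambda^{2T-1}$ is a safely conservative version (perhaps to absorb an off-by-one in the indexing or to allow the window to start mid-block). For $t<T$, again $P^{-1}(t)\succeq\delta_1 I_n$ by definition, and taking the min over the two regimes yields $\gamma_1$.

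The main obstacle is purely bookkeeping: tracking the powers of $\lambda$ through the block decomposition so that the geometric sums close to the exact constants $\frac{2-\lambda}{1-\lambda}$ and $\lambda^{2T-1}$, and getting the PE window indices aligned with the unrolled sum (the shift between "window starting at $t+1$" in Definition \ref{def:PE} and "sum ending at $t$" in the recursion). There is no conceptual difficulty — positive-definiteness is preserved by the recursion automatically — only the care needed to make the constants honest, including the edge cases $t<T$ handled by the definitions of $\delta_1,\delta_2$.
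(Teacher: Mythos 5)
Your plan is correct, but note that the paper itself does not prove this lemma --- it defers the proof to \cite{Bako16-Automatica-b} --- so there is no in-paper argument to compare against. Your route is the standard one and it works: induction plus the matrix inversion lemma for invertibility and \eqref{eq:update-Inverse-P}, then the unrolled form $P^{-1}(t)=\lambda^{t}P^{-1}(0)+\sum_{k=1}^{t}\lambda^{t-k}x(k)x(k)^\top$ combined with the PE bounds over length-$T$ blocks. Your bookkeeping actually yields sharper constants than the stated ones: the last-window argument gives $P^{-1}(t)\succeq \alpha\lambda^{T-1}I_n$ for $t\geq T$, which implies the stated $\alpha\lambda^{2T-1}$ since $\lambda\in\interval[open]{0}{1}$, and the geometric block sum gives at most $\beta/(1-\lambda^{T})\leq\beta(2-\lambda)/(1-\lambda)$ (the extra additive $\beta$ needed to absorb the trailing partial block is covered because $1+1/(1-\lambda)=(2-\lambda)/(1-\lambda)$). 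The only points to make explicit in a full write-up are (i) that the PE window $\sum_{k=s+1}^{s+T}x(k)x(k)^\top$ with $s=t-T$ requires $t\geq T$, which is exactly why $\delta_1,\delta_2$ handle $t<T$, and (ii) that a partial block satisfies $\sum_{k=1}^{r}x(k)x(k)^\top\preceq\beta I_n$ for $r<T$ because the summands are positive semidefinite. Neither is a gap, just the care you already flagged.
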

\noindent A proof of this lemma can be found in \cite{Bako16-Automatica-b}. 

\noindent Next we derive an input-to-state-stability (ISS) property for the error dynamics \eqref{eq:error} subject to  \eqref{eq:update-q}-\eqref{eq:update-P}. 
\begin{thm}\label{thm:ISS-RLS}
Consider the RLS algorithm applied to the data generated by system \eqref{eq:system}. If the regressor sequence $\left\{x(t)\right\}$ is PE, then 
\begin{equation}
	\big\|\tilde{\theta}(t)\big\|_2^2\leq \dfrac{1}{\gamma_1}\Big[\lambda^t \sigma_{\max}[P^{-1}(0)]\big\|\tilde{\theta}(0)\big\|_2^2+\sum_{k=1}^{t}\lambda^{t-k}v(k)^2 \Big]
\end{equation}
where $\tilde{\theta}(t)=\theta(t)-\theta^\circ$ is the parametric estimation error at time $t$ and $\gamma_1$ is any positive number satisfying \eqref{eq:boundedness-P(t)}. 
\end{thm}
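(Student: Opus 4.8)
The plan is to study the error dynamics \eqref{eq:error} through the quadratic Lyapunov-type function
\[
W(t)=\tilde{\theta}(t)^\top P^{-1}(t)\tilde{\theta}(t),
\]
which is well defined since $P(t)\succ 0$ by Lemma \ref{lem:bounded-InverseP}. The first step is to record a few algebraic identities from \eqref{eq:update-theta}--\eqref{eq:update-P}. Directly from \eqref{eq:update-P} one gets $P(t)x(t)=q(t)$, i.e. $P^{-1}(t)q(t)=x(t)$; combined with $A(t)=I_n-q(t)x(t)^\top$ and the inverse recursion \eqref{eq:update-Inverse-P}, this yields the key identity $P^{-1}(t)A(t)=P^{-1}(t)-x(t)x(t)^\top=\lambda P^{-1}(t-1)$. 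Writing $d(t)=\lambda+x(t)^\top P(t-1)x(t)>0$, one also obtains $P^{-1}(t-1)q(t)=x(t)/d(t)$ and $q(t)^\top P^{-1}(t)q(t)=x(t)^\top q(t)=\big(d(t)-\lambda\big)/d(t)$.

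Next I would insert $\tilde{\theta}(t)=A(t)\tilde{\theta}(t-1)+q(t)v(t)$ into $W(t)$ and expand it into a quadratic, a cross and a noise term. Using the identities above, with $a\triangleq x(t)^\top\tilde{\theta}(t-1)$, the quadratic term equals $\lambda\,\tilde{\theta}(t-1)^\top P^{-1}(t-1)\tilde{\theta}(t-1)-\frac{\lambda}{d(t)}a^2$, the cross term equals $\frac{2\lambda}{d(t)}a\,v(t)$, and the noise term equals $\frac{d(t)-\lambda}{d(t)}v(t)^2$. Completing the square in $a-v(t)$ then gives
\begin{equation*}
W(t)=\lambda W(t-1)+v(t)^2-\frac{\lambda}{d(t)}\big(a-v(t)\big)^2\leq \lambda W(t-1)+v(t)^2 .
\end{equation*}
This is the heart of the argument: the completion of the square is what converts the a priori sign-indefinite cross term into a nonpositive contribution.

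Iterating this one-step inequality from $t$ down to $0$ gives $W(t)\leq \lambda^t W(0)+\sum_{k=1}^t\lambda^{t-k}v(k)^2$. It remains to pass back to the Euclidean norm. Trivially $W(0)=\tilde{\theta}(0)^\top P^{-1}(0)\tilde{\theta}(0)\leq \sigma_{\max}[P^{-1}(0)]\,\|\tilde{\theta}(0)\|_2^2$, while the persistence of excitation hypothesis together with Lemma \ref{lem:bounded-InverseP} provides the uniform bound \eqref{eq:boundedness-P(t)}, whence $W(t)\geq \gamma_1\|\tilde{\theta}(t)\|_2^2$ for any $\gamma_1$ satisfying \eqref{eq:boundedness-P(t)}. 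Dividing the iterated inequality by $\gamma_1$ delivers the claimed bound.

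I expect the only real difficulty to be careful bookkeeping: the identity $P^{-1}(t)A(t)=\lambda P^{-1}(t-1)$ and the exact value of $q(t)^\top P^{-1}(t)q(t)$ must be obtained without sign errors, since a slip there would destroy the completion of the square. It is worth noting that only the lower bound in \eqref{eq:boundedness-P(t)} is actually used here; the upper bound (hence the upper excitation level $\beta$) is not needed for this inequality, although it is required to conclude BIBO stability of \eqref{eq:error} as discussed after \eqref{eq:stability-condition}.
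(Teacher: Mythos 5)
Your proposal is correct and follows essentially the same route as the paper: the identical Lyapunov function $\tilde{\theta}(t)^\top P^{-1}(t)\tilde{\theta}(t)$, the same exact one-step identity $W(t)=\lambda W(t-1)+v(t)^2-\tfrac{\lambda}{d(t)}\big(x(t)^\top\tilde{\theta}(t-1)-v(t)\big)^2$ (the paper writes the last term as $-\tfrac{\lambda}{s(t)}\varepsilon(t)^2$), and the same iteration combined with the bounds $\gamma_1 I_n\preceq P^{-1}(t)$ and $P^{-1}(0)\preceq\sigma_{\max}[P^{-1}(0)]I_n$. The only cosmetic difference is that you expand via $A(t)\tilde{\theta}(t-1)+q(t)v(t)$ and identities for $P^{-1}(t)$, whereas the paper expands via $\tilde{\theta}(t-1)+q(t)\varepsilon(t)$ and identities for $P^{-1}(t-1)$; your closing remark that only the lower bound of \eqref{eq:boundedness-P(t)} is used here is also accurate.
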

%%%
\ifProof
{ 
\begin{proof}
Let  $V(t)=\tilde{\theta}(t)^\top P^{-1}(t)\tilde{\theta}(t)$. By subtracting the true parameter vector $\theta^\circ$ from each side of \eqref{eq:update-theta} and invoking the equation of the data-generating system \eqref{eq:system}, it is easy to see that $\tilde{\theta}(t)=\tilde{\theta}(t-1)+q(t)\varepsilon(t)$, 
where $\varepsilon(t)=y(t)-x(t)^\top \hat{\theta}(t-1)=v(t)-x(t)^\top \tilde{\theta}(t-1)$. On the other hand, we know from Lemma \ref{lem:bounded-InverseP} that $\left\{P^{-1}(t)\right\}$ obeys the recursive relation \eqref{eq:update-Inverse-P}. 
Now by direct algebraic calculations it can be seen that  
{
$$\begin{aligned}
	V(t)&=\big(\tilde{\theta}(t-1)+q(t)\varepsilon(t)\big)^\top \big(\lambda P^{-1}(t-1)+x(t)x(t)^\top\big) \times \ldots \\ & \hspace{4cm}\ldots \times \big(\tilde{\theta}(t-1)+q(t)\varepsilon(t)\big)\\
	& = \lambda V(t-1)+2\lambda\tilde{\theta}(t-1)^\top P^{-1}(t-1)q(t)\varepsilon(t)\\
	& \quad +2(x(t)^\top\tilde{\theta}(t-1))(x(t)^\top q(t))\varepsilon(t) + (x(t)^\top\tilde{\theta}(t-1))^2\\
	& \quad+\lambda q(t)^\top P^{-1}(t-1)q(t)\varepsilon(t)^2+(x(t)^\top q(t))^2\varepsilon(t)^2
\end{aligned}
$$
}
Note now that by posing $s(t)=\lambda+x(t)^\top P(t-1)x(t)$, we have 
$$\begin{aligned}
	&q(t)^\top P^{-1}(t-1)q(t)=\dfrac{1}{s(t)}-\dfrac{\lambda}{s(t)^2}\\
	& x(t)^\top q(t) = 1-\dfrac{\lambda}{s(t)}\\
	& P^{-1}(t-1)q(t) = \dfrac{x(t)}{s(t)}
\end{aligned}$$
Substituting these formulas in the above expression of $V(t)$ gives
$$ V(t)=\lambda V(t-1)-\dfrac{\lambda}{s(t)}\varepsilon(t)^2+v(t)^2.$$ 
It follows that $V(t)\leq \lambda V(t-1)+v(t)^2. $
Iterating this last equation and invoking the property \eqref{eq:boundedness-P(t)} of $\left\{P^{-1}(t)\right\}$  yields 
$$\gamma_1\big\|\tilde{\theta}(t)\big\|_2^2\leq  V(t)\leq \lambda^tV(0)+\sum_{k=1}^{t}\lambda^{t-k}v(k)^2  $$
which, by using the fact that $P^{-1}(0)\preceq \sigma_{\max}[P^{-1}(0)] I_n$,  implies that
$$\gamma_1\big\|\tilde{\theta}(t)\big\|_2^2\leq \lambda^t\sigma_{\max}[P^{-1}(0)]\big\|\tilde{\theta}(0)\big\|_2^2+\sum_{k=1}^{t}\lambda^{t-k}v(k)^2.  $$
Hence the claim of the theorem is established. 
\end{proof}
}
\else
{
\rev{
\begin{proof}
See \cite{Bako22} for a proof. 
\end{proof}
}
}
\fi
%%%%%%%%%%%%%%%%%%%%%%

\begin{cor}\label{cor:Noise-free}
Under the conditions of Theorem \ref{thm:ISS-RLS}, if the noise $v$ of model \eqref{eq:system} is identically  equal to zero, then
\begin{equation}\label{eq:bound-norm-state}
	\big\|\tilde{\theta}(t)\big\|_2\leq \left(\dfrac{\lambda^{t}\sigma_{\max}[P^{-1}(0)]}{\gamma_1}\right)^{1/2}\big\|\tilde{\theta}(0)\big\|_2 
\end{equation}
that is, the sequence $\left\{\theta(t)\right\}$ generated by the RLS algorithm converges to $\theta^\circ$ exponentially fast regardless of the initial point $\theta(0)$.  
\end{cor}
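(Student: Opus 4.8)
The plan is to derive Corollary \ref{cor:Noise-free} as an immediate specialization of Theorem \ref{thm:ISS-RLS}, so the work is essentially a substitution plus a monotonicity observation, and there is no genuine obstacle here beyond checking the exponential decay claim carefully.

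First I would set $v \equiv 0$ in the bound of Theorem \ref{thm:ISS-RLS}. The summation term $\sum_{k=1}^{t}\lambda^{t-k}v(k)^2$ vanishes identically, leaving
$$
\big\|\tilde{\theta}(t)\big\|_2^2 \leq \dfrac{\lambda^t \sigma_{\max}[P^{-1}(0)]}{\gamma_1}\big\|\tilde{\theta}(0)\big\|_2^2,
$$
and taking square roots (all quantities being nonnegative, and $\gamma_1>0$ by the PE hypothesis through Lemma \ref{lem:bounded-InverseP}) yields exactly \eqref{eq:bound-norm-state}. This already gives the stated inequality verbatim.

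Second, I would argue the convergence claim. Since $\lambda\in\interval[open]{0}{1}$, we have $\lambda^t\to 0$ as $t\to\infty$, and because $\sigma_{\max}[P^{-1}(0)]$ and $\gamma_1$ are fixed positive constants independent of $t$, the prefactor $\big(\lambda^t\sigma_{\max}[P^{-1}(0)]/\gamma_1\big)^{1/2}$ decays geometrically with rate $\lambda^{1/2}$. Hence $\big\|\tilde{\theta}(t)\big\|_2\to 0$, i.e. $\theta(t)\to\theta^\circ$, at an exponential rate that is uniform in the initial condition $\theta(0)$ (the dependence on $\theta(0)$ enters only through the multiplicative factor $\|\tilde\theta(0)\|_2$). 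One might additionally remark that this matches the exponential-stability notion of the preceding definition with $\rho=\lambda^{1/2}$ and $\gamma = \big(\sigma_{\max}[P^{-1}(0)]/\gamma_1\big)^{1/2}$, though strictly speaking that identification requires $t_0=0$; for general $t_0$ one would restart the ISS estimate from time $t_0$, which is legitimate since Theorem \ref{thm:ISS-RLS} holds for the RLS recursion initialized at any time.

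The only point demanding a sentence of care is that $\gamma_1$ is guaranteed strictly positive precisely because of the persistence-of-excitation assumption (inherited from the hypotheses of Theorem \ref{thm:ISS-RLS} via Lemma \ref{lem:bounded-InverseP}); without it the bound would be vacuous. No other subtlety arises, so the proof is a two-line computation followed by the limiting argument.
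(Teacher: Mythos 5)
Your proposal is correct and follows exactly the route the paper intends: the corollary is an immediate specialization of Theorem \ref{thm:ISS-RLS} obtained by setting $v\equiv 0$ (so the sum vanishes) and taking square roots, with convergence following from $\lambda\in\interval[open]{0}{1}$; indeed the paper gives no separate proof because it regards this as immediate. Your additional remarks on the strict positivity of $\gamma_1$ via the PE condition and on restarting the estimate at an arbitrary $t_0$ are accurate and consistent with how the paper later uses this corollary to bound $\left\|\Phi(t,t_0)\right\|_F$.
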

%%%%
\begin{lem}
Consider the state transition matrix-valued function $\Phi$ defined in \eqref{eq:PHI} from the RLS error system \eqref{eq:error}. If the regressor sequence $\left\{x(t)\right\}\subset\Re^n$ is PE, then there exist constant real positive numbers $\gamma_1$ and $\gamma_2$ such that for all $(t,t_0)$ obeying $t\geq t_0$, 
\begin{equation}\label{eq:Exponential-Decay-PHI}
	\left\|\Phi(t,t_0)\right\|_F\leq c \rho^{t-t_0} 
\end{equation}
where $c= (n\gamma_2 \gamma_1^{-1})^{1/2}$ and $\rho = \lambda^{1/2}$, $\lambda$ being the forgetting factor of the RLS algorithm. 
\end{lem}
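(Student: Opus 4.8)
The plan is to bound $\left\|\Phi(t,t_0)\right\|_F$ by relating it to the Lyapunov-type quantity $V(\cdot)=\xi(\cdot)^\top P^{-1}(\cdot)\xi(\cdot)$ already used in the proof of Theorem \ref{thm:ISS-RLS}, applied to the homogeneous error system \eqref{eq:LTV} with $A(t)=I_n-q(t)x(t)^\top$. Concretely, I would fix $t_0$ and an arbitrary initial vector $\xi(t_0)=\xi_0$, propagate it by $\xi(t)=\Phi(t,t_0)\xi_0$, and observe that the computation in the proof of Theorem \ref{thm:ISS-RLS} with $v\equiv 0$ gives $V(t)\leq \lambda^{t-t_0}V(t_0)$ (the noise-free contraction), where $V(k)=\xi(k)^\top P^{-1}(k)\xi(k)$. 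This is essentially Corollary \ref{cor:Noise-free} but started from time $t_0$ rather than $0$.

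Next I would convert this into a bound on the Euclidean norm of $\xi(t)$. Using the PE assumption and Lemma \ref{lem:bounded-InverseP}, we have $\gamma_1 I_n\preceq P^{-1}(k)\preceq \gamma_2 I_n$ for all $k\geq 0$, hence $\gamma_1\left\|\xi(t)\right\|_2^2\leq V(t)\leq \lambda^{t-t_0}V(t_0)\leq \lambda^{t-t_0}\gamma_2\left\|\xi(t_0)\right\|_2^2$, which yields $\left\|\xi(t)\right\|_2\leq (\gamma_2/\gamma_1)^{1/2}\lambda^{(t-t_0)/2}\left\|\xi(t_0)\right\|_2$. Since $\xi_0$ is arbitrary, this is exactly the spectral-norm bound $\left\|\Phi(t,t_0)\right\|_2\leq (\gamma_2/\gamma_1)^{1/2}\rho^{t-t_0}$ with $\rho=\lambda^{1/2}$. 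Finally, for any matrix $M\in\Re^{n\times n}$ one has $\left\|M\right\|_F\leq \sqrt{n}\,\left\|M\right\|_2$ (since $\left\|M\right\|_F^2=\sum_i\sigma_i(M)^2\leq n\,\sigma_{\max}(M)^2$), so $\left\|\Phi(t,t_0)\right\|_F\leq \sqrt{n}\,(\gamma_2/\gamma_1)^{1/2}\rho^{t-t_0}=c\,\rho^{t-t_0}$ with $c=(n\gamma_2\gamma_1^{-1})^{1/2}$, which is the claim.

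There is a minor gap to fill carefully: the identity $V(t)=\lambda V(t-1)-\frac{\lambda}{s(t)}\varepsilon(t)^2+v(t)^2$ derived in the proof of Theorem \ref{thm:ISS-RLS} was written for the RLS error driven by the data; for the homogeneous system \eqref{eq:LTV} the role of $v(t)$ is played by $0$ and $\varepsilon(t)$ becomes $-x(t)^\top\xi(t-1)$, so the recursion reads $V(t)=\lambda V(t-1)-\frac{\lambda}{s(t)}(x(t)^\top\xi(t-1))^2\leq \lambda V(t-1)$. I would state this explicitly rather than merely cite Corollary \ref{cor:Noise-free}, so that the starting time $t_0$ and the arbitrary $\xi_0$ are handled cleanly. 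The main (though still modest) obstacle is thus bookkeeping: making sure the Lyapunov recursion is re-derived for \eqref{eq:LTV} shifted to start at $t_0$, and invoking the two-sided bound \eqref{eq:boundedness-P(t)} on $P^{-1}$ at both ends; everything else is the elementary norm comparison $\left\|\cdot\right\|_F\leq\sqrt{n}\left\|\cdot\right\|_2$.
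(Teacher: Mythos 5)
Your proposal is correct and follows essentially the same route as the paper's own proof: propagate the noise-free error system from an arbitrary start time $t_0$, use the Lyapunov contraction $V(t)\leq\lambda V(t-1)$ together with the uniform bounds $\gamma_1 I_n\preceq P^{-1}(t)\preceq\gamma_2 I_n$ from Lemma \ref{lem:bounded-InverseP} to get the induced $2$-norm bound on $\Phi(t,t_0)$, and finish with $\left\|\cdot\right\|_F\leq\sqrt{n}\left\|\cdot\right\|_2$. The only cosmetic difference is that you re-derive the Lyapunov recursion explicitly, whereas the paper simply cites Corollary \ref{cor:Noise-free} with the time origin shifted to $t_0$.
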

%%%%%
\ifProof
{ 
\begin{proof}
Consider the error system \eqref{eq:error} under the assumption that the  noise sequence $\left\{v(t)\right\}$ is equal to zero. Then for any $(t,t_0)$ such that $t\geq t_0$, we have $\tilde{\theta}(t)=\Phi(t,t_0)\tilde{\theta}(t_0)$. 
Moreover, Corollary \ref{cor:Noise-free} can be applied by replacing the time origin for an arbitrary $t_0\in \mathbb{Z}_+$  such that $t\geq t_0\geq 0$. This gives
$$
\big\|\tilde{\theta}(t)\big\|_2\leq \left(\dfrac{\lambda^{t-t_0}\sigma_{\max}[P^{-1}(t_0)]}{\gamma_1}\right)^{1/2}\big\|\tilde{\theta}(t_0)\big\|_2 
$$
for any value of $\tilde{\theta}(t_0)\in \Re^n$. 
Since the PE condition holds here for $\left\{x(t)\right\}$, we know by Lemma \ref{lem:bounded-InverseP} that there exists  a constant number $\gamma_2>0$ such that $\sigma_{\max}[P^{-1}(t_0)]\leq \gamma_2$ (see Eq. \eqref{eq:boundedness-P(t)}). We can hence write
$$
\big\|\tilde{\theta}(t)\big\|_2\leq \left(\dfrac{\gamma_2}{\gamma_1}\right)^{1/2}\rho^{t-t_0}\big\|\tilde{\theta}(t_0)\big\|_2.  
$$
This implies that 
$$
\left\|\Phi(t,t_0)\right\|_2=\sup_{\tilde{\theta}(t_0)\neq 0}\dfrac{\big\|\Phi(t,t_0)\tilde{\theta}(t_0)\big\|_2}{\big\|\tilde{\theta}(t_0)\big\|_2} \leq \left(\dfrac{\gamma_2}{\gamma_1}\right)^{1/2}\rho^{t-t_0}. 
$$
Finally, the result follows by recalling that \\ $\left\|\Phi(t,t_0)\right\|_F\leq \sqrt{n}\left\|\Phi(t,t_0)\right\|_2$. 
%for all $(t,t_0)$ such that $t\geq t_0\geq 0$,   with $c= \left(n\gamma_1^{-1}\sigma_{\max}[P^{-1}(0)]\right)^{1/2}$. 
\end{proof}
}
\else
{
\rev{
\begin{proof}
See \cite{Bako22} for a proof. 
\end{proof}
}
}
\fi

\section{Interval-valued estimator}\label{sec:interval-estimator}
\rev{\noindent In this section we present the main contributions of the paper concerning the development of an adaptive interval-valued parametric estimator.  As explained at the end of Section \ref{sec:Problem}, our method relies  on the error sequence generated by a point-value adaptive estimator. Considering the special case of the RLS, we obtain the error dynamics expressed in \eqref{eq:convolution} which is directly related to the noise. Applying Lemma \ref{lem:Az+w} to this equation gives an interval estimate of the error $\tilde{\theta}(t)=\hat{\theta}(t)-\theta^\circ$, which can then be modified to get an estimate of $\theta^\circ$. 
}

%%%%%%%
\subsection{Derivation of an interval-valued estimator}
Assume now that we are given a recursive point-valued estimator (say the RLS algorithm dicussed earlier) generating a sequence of estimates $\left\{\theta(t)\right\}$ for $\theta^\circ$ in \eqref{eq:system}.  
To derive an interval-valued estimator for $\theta^\circ$, % for the parameter vector $\theta^\circ$ in \eqref{eq:system}, 
we first find an interval-valued estimate for the error $\tilde{\theta}(t)$ defined in \eqref{eq:error}. We do so by applying Lemma \ref{lem:Az+w} to \eqref{eq:convolution} which, for convenience, can be rewritten as
$ \tilde{\theta}(t)=M(t)z(t)$
with 
$$
\begin{aligned}
	& M(t)=\begin{bmatrix}\Phi(t,0) & \Phi(t,1)q(1) & \cdots & \Phi(t,t)q(t)\end{bmatrix}\\
	& z(t)=\begin{bmatrix}\tilde{\theta}(0)^\top & v(1) & \cdots & v(t)\end{bmatrix}^\top. 
\end{aligned}
$$
We hence obtain immediately from  Lemma \ref{lem:Az+w} that the  smallest interval set containing the parametric error $\tilde{\theta}(t)$  can be expressed in term of its center-radius pair $(c_{\tilde{\theta}},r_{\tilde{\theta}})$ given by 
\begin{align}
&c_{\tilde{\theta}}(t)=\Phi(t,0)c_{\tilde{\theta}}(0)+\sum_{j=1}^{t}\Phi(t,j)q(j)c_v(j) \label{eq:c-theta}\\
& r_{\tilde{\theta}}(t)=|\Phi(t,0)|r_{\tilde{\theta}}(0)+\sum_{j=1}^{t}|\Phi(t,j)q(j)|r_v(j) \label{eq:r-theta}
\end{align}
where $(c_v,r_v)$ is the pair of signals defining the intervals of the noise sequence $\left\{v(t)\right\}$ and $c_{\tilde{\theta}}(0)=c_{\theta}(0)-\theta^\circ$ and $r_{\tilde{\theta}}(0)=r_{\theta}(0)$. 
Recalling now that $\theta^\circ = \theta(t)-\tilde{\theta}(t)$, an interval-valued estimate of the $\theta^\circ$ can be obtained as proposed in the following proposition.

\begin{prop}\label{prop:INT}
Consider the system \eqref{eq:system} and assume that the regressor sequence $\left\{x(t)\right\}$ is  PE in the sense of Definition \ref{def:PE} and that the noise $\left\{v(t)\right\}$ is bounded and admits an interval representation $(c_v(t),r_v(t))$.  
Then the  pair $(c_\theta(t),r_\theta(t))$ given by 
\begin{equation}\label{eq:interval-estimate}
	\left\{\begin{aligned}
		&c_\theta(t)=\theta(t)-c_{\tilde{\theta}}(t)\\
		&r_{\theta}(t)=r_{\tilde{\theta}}(t)
	\end{aligned}\right.
\end{equation}
with $(c_{\tilde{\theta}}(t),r_{\tilde{\theta}}(t))$ as in \eqref{eq:c-theta}-\eqref{eq:r-theta}, defines an interval estimator for the parameter vector $\theta^\circ$.
\end{prop}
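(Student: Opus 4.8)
The plan is to verify the two defining properties of an interval-valued estimator from Definition~\ref{def:Interval-Estimator}: the inclusion property~(a) and the BIBO stability property~(b). The inclusion is essentially immediate from the construction, while the stability requires combining the exponential decay of the state transition matrix $\Phi$ with boundedness of the gain sequence $\{q(t)\}$.

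First I would establish property~(a). By construction, \eqref{eq:convolution} expresses $\tilde{\theta}(t)=M(t)z(t)$, where each coordinate of $z(t)$ lies in the interval with center/radius pair read off from $(c_{\tilde\theta}(0),r_{\tilde\theta}(0))$ and $(c_v(j),r_v(j))$; here one uses that $\underline{\theta}(0)\le\theta^\circ\le\overline{\theta}(0)$ is equivalent to $\tilde\theta(0)$ lying in the interval centered at $c_{\tilde\theta}(0)=c_\theta(0)-\theta^\circ$ with radius $r_\theta(0)$, together with Assumption~\ref{assum:Bounding}. Lemma~\ref{lem:Az+w} then gives that $\tilde\theta(t)\in\mathscr{I}(c_{\tilde\theta}(t),r_{\tilde\theta}(t))$ with $(c_{\tilde\theta}(t),r_{\tilde\theta}(t))$ as in \eqref{eq:c-theta}--\eqref{eq:r-theta}. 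Since $\theta^\circ=\theta(t)-\tilde\theta(t)$ and $\theta(t)$ is a fixed (known) point, translating the interval by $\theta(t)$ shows $\theta^\circ\in\mathscr{I}(\theta(t)-c_{\tilde\theta}(t),r_{\tilde\theta}(t))=\mathscr{I}(c_\theta(t),r_\theta(t))$, i.e.\ $\underline\theta(t)\le\theta^\circ\le\overline\theta(t)$ for all $t$.

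Next I would turn to property~(b), the BIBO stability, which I expect to be the main obstacle. One must show that if $v$, $y$ (hence $x$, and thus $\theta(t)$ via the RLS recursion), and the initial interval are bounded, then $(c_\theta(t),r_\theta(t))$ stays bounded. Under the PE assumption, Lemma~\ref{lem:bounded-InverseP} guarantees $\gamma_1 I_n\preceq P^{-1}(t)\preceq\gamma_2 I_n$, which bounds $\{P(t)\}$ and hence $\{q(t)\}$ uniformly (from \eqref{eq:update-q}, $\|q(t)\|_2\le \|P(t-1)x(t)\|_2/\lambda$, with $x$ bounded and $\|P(t-1)\|_2\le 1/\gamma_1$). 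The preceding Lemma gives $\|\Phi(t,t_0)\|_F\le c\rho^{t-t_0}$ with $\rho=\lambda^{1/2}<1$. Applying these to \eqref{eq:c-theta}--\eqref{eq:r-theta}: the initial-condition terms $\Phi(t,0)c_{\tilde\theta}(0)$ and $|\Phi(t,0)|r_{\tilde\theta}(0)$ decay geometrically, while each convolution sum is bounded by $\sum_{j=1}^t c\rho^{t-j}\|q(j)\|\,\|c_v(j)\|\le c\,\bar q\,\bar c_v\sum_{j=1}^t\rho^{t-j}\le c\,\bar q\,\bar c_v/(1-\rho)$, and similarly for the radius term with $r_v$ in place of $c_v$ (here $\bar q$, $\bar c_v$, $\bar r_v$ are the uniform bounds). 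Thus $c_{\tilde\theta}(t)$ and $r_{\tilde\theta}(t)$ are bounded; since $\theta(t)$ is bounded (the RLS estimate is BIBO-stable under PE, e.g.\ via Theorem~\ref{thm:ISS-RLS} which bounds $\|\tilde\theta(t)\|_2$ in terms of $\|\tilde\theta(0)\|_2$ and $\sum\lambda^{t-k}v(k)^2\le \bar v^2/(1-\lambda)$), it follows that $c_\theta(t)=\theta(t)-c_{\tilde\theta}(t)$ and $r_\theta(t)=r_{\tilde\theta}(t)$ are bounded. This establishes~(b) and completes the proof.

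The one subtlety worth flagging is that property~(b) as stated in Definition~\ref{def:Interval-Estimator} asks for BIBO stability under \emph{arbitrary} bounded $v,y$, whereas the bounds above rely on the PE condition on $\{x(t)\}$ (which, when $x$ contains lagged outputs, is an implicit hypothesis of the proposition anyway); I would simply note that PE is part of the standing assumptions of Proposition~\ref{prop:INT} and that it delivers both the uniform bound on $\{q(t)\}$ and the exponential decay of $\Phi$, which are exactly what the geometric-series estimates need.
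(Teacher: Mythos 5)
Your proposal is correct and follows essentially the same route as the paper: the inclusion property comes from Lemma~\ref{lem:Az+w} applied to the convolution form \eqref{eq:convolution} together with the translation $\theta^\circ=\theta(t)-\tilde{\theta}(t)$, and BIBO stability comes from the exponential decay \eqref{eq:Exponential-Decay-PHI} of $\Phi$ under PE combined with the uniform bound on $\{q(t)\}$. You merely spell out the geometric-series estimates that the paper leaves as ``immediate,'' which is a faithful elaboration rather than a different argument.
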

\begin{proof}
To begin with, let us recall that $\theta^\circ = \theta(t)-\tilde{\theta}(t)$ and \eqref{eq:c-theta}-\eqref{eq:r-theta} is such that  $\tilde{\theta}(t)\in \interval{c_{\tilde{\theta}}(t)-r_{\tilde{\theta}}(t)}{c_{\tilde{\theta}}(t)+r_{\tilde{\theta}}(t)}$ for all $t\in \mathbb{Z}_+$.   It follows, as an immediate consequence, that
$\theta^\circ\in  \interval{c_\theta(t)-r_\theta(t)}{c_\theta(t)+r_\theta(t)}$ with $(c_\theta,r_\theta)$ defined as in \eqref{eq:interval-estimate}.  To reach the conclusion of the proposition, it remains  to prove that the dynamical systems (operators) $(c_\theta(0),c_v)\mapsto c_\theta$ and $(r_\theta(0),r_v)\mapsto r_\theta$ are BIBO stable. By relying on \eqref{eq:c-theta}-\eqref{eq:r-theta}, it is immediate to see that, under the PE condition, both  properties follow indeed from \eqref{eq:Exponential-Decay-PHI} which in turn  is a consequence of Theorem \ref{thm:ISS-RLS}. 
\end{proof}

\subsection{Computational aspects}
Implementing numerically the estimator \eqref{eq:interval-estimate} requires computing $(c_{\tilde{\theta}}(t),r_{\tilde{\theta}}(t))$ defined in \eqref{eq:c-theta}-\eqref{eq:r-theta} for any time $t$. A problem however is that these convolutional formulas become infeasible in practice when $t$ grows towards infinity. Therefore it is desirable to find an efficient implementation of this estimator for example, in the form of a one-step-ahead state-space recursive realization. 
In this perspective, note that $c_\theta(t)$ can be computed recursively through the following equation
\begin{equation}\label{eq:realization-c}
c_{\theta}(t)=A(t) c_{\theta}(t-1)+q(t)\big(y(t)-c_v(t)\big). 
\end{equation}
Unfortunately, there is, in general, no simple recursive implementation for the interval radius $r_{\theta}$ as defined \eqref{eq:r-theta}-\eqref{eq:interval-estimate}. Hence, a strategy would be to search for a more pessimistic estimate $r_{\theta}$ but which would be implementable. That is, the computational constraint introduces a dose of pessimism in the estimation, resulting in a less tight interval-valued estimate of $\theta^\circ$. 
A possible solution is to replace $r_{\theta}$ with a truncated version $\hat{r}_{\theta,m}$ defined, for a given integer $m>0$, by 
\begin{equation}\label{eq:rhat}
		\hat{r}_{\theta,m}(t)=
	\left\{	\begin{array}{ll} 
	\displaystyle	\left|\Phi(t,0)\right|{r}_{\theta}(0)+ \sum_{k=1}^{t} |\Phi(t,k)q(k)|r_v(k) \\  \mbox{ if } t=0,\ldots,m\\
	\displaystyle	\left|\Phi(t,t-m)\right|\hat{r}_{\theta,m}(t-m)+ \sum_{k=t-m+1}^{t} |\Phi(t,k)q(k)|r_v(k), \\ \mbox{ if } t> m
		\end{array}\right.
\end{equation}
Intuitively $\hat{r}_{\theta,m}$ is all the smaller as $m$ is large. On the other  hand the computational complexity grows with $m$. Note that in the extreme case where  $m=t$, we get $\hat{r}_{\theta,m}(t)=r_\theta(t)$ for all $t\in \mathbb{Z}_+$.   The simplest version (but also the most pessimistic) of the family \eqref{eq:rhat} of estimates is obtained for $m=1$,  
\begin{equation}
		\hat{r}_{\theta,1}(t)= \left|A(t)\right|\hat{r}_{\theta,1}(t-1)+ |q(t)|r_v(t)
\end{equation}
\rev{However, as we will see shortly, such an estimate is unlikely to satisfy the BIBO condition of Definition \ref{def:Interval-Estimator}. As a consequence it will not qualify in general as an interval-valued estimator
}

%%%%%%
The result below formally shows that for any time $t$, the interval $\mathscr{I}(c_\theta(t,r_{\theta}(t))$ is included in  $\mathscr{I}(c_\theta(t),\hat{r}_{\theta,m}(t))$.  
%%%
\begin{lem}\label{lem:r<=r'}
Consider the interval radii $r_{\theta}(t)$ and $\hat{r}_{\theta,m}(t)$  defined in \eqref{eq:r-theta} and \eqref{eq:rhat} respectively. 
For any fixed integer $m$, it holds that $r_{\theta}(t)\leq \hat{r}_{\theta,m}(t)$ for all $t\in \mathbb{Z}_+$. 
\end{lem}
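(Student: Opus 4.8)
The plan is to prove the inequality $r_\theta(t) \le \hat r_{\theta,m}(t)$ by induction on the number of "blocks" of length $m$ that separate $t$ from the initial time, using the semigroup property \eqref{eq:Property-PHI} of the state transition matrix together with the elementwise submultiplicativity $|AB| \le |A|\,|B|$ of the absolute-value operation on matrices. Throughout, recall that $r_\theta(t) = r_{\tilde\theta}(t)$ and that all radii involved are nonnegative vectors, so elementwise inequalities are preserved under left-multiplication by any nonnegative matrix (in particular by $|\Phi(t,t-m)|$).

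First I would observe that for $t = 0,1,\ldots,m$ the two expressions coincide by definition, so $r_\theta(t) = \hat r_{\theta,m}(t)$ there and the base case is immediate. For the inductive step, suppose $t > m$ and that $r_\theta(s) \le \hat r_{\theta,m}(s)$ has been established for all $s < t$; in particular $r_\theta(t-m) \le \hat r_{\theta,m}(t-m)$. Starting from the convolutional formula \eqref{eq:r-theta}, I would split the sum at index $t-m$:
\[
r_\theta(t) = |\Phi(t,0)|\,r_\theta(0) + \sum_{k=1}^{t-m} |\Phi(t,k)q(k)|\,r_v(k) + \sum_{k=t-m+1}^{t} |\Phi(t,k)q(k)|\,r_v(k).
\]
Using \eqref{eq:Property-PHI} in the form $\Phi(t,k) = \Phi(t,t-m)\Phi(t-m,k)$ for every $k \le t-m$ (and likewise $\Phi(t,0) = \Phi(t,t-m)\Phi(t-m,0)$), together with $|\Phi(t,t-m)\Phi(t-m,k)q(k)| \le |\Phi(t,t-m)|\,|\Phi(t-m,k)q(k)|$, the first two terms are bounded above (elementwise) by
\[
|\Phi(t,t-m)|\Big( |\Phi(t-m,0)|\,r_\theta(0) + \sum_{k=1}^{t-m} |\Phi(t-m,k)q(k)|\,r_v(k)\Big) = |\Phi(t,t-m)|\, r_\theta(t-m).
\]
Then, by the induction hypothesis and nonnegativity of $|\Phi(t,t-m)|$, this is in turn $\le |\Phi(t,t-m)|\,\hat r_{\theta,m}(t-m)$, and adding the untouched tail sum $\sum_{k=t-m+1}^{t} |\Phi(t,k)q(k)|\,r_v(k)$ gives exactly the defining expression for $\hat r_{\theta,m}(t)$ in \eqref{eq:rhat}. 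This closes the induction.

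The only point requiring a little care — and the one I would flag as the main (minor) obstacle — is making the elementwise bookkeeping airtight: one must verify that $|\Phi(t,k)q(k)| \le |\Phi(t,t-m)|\,|\Phi(t-m,k)q(k)|$ holds entrywise (this is just $|AB| \le |A||B|$ applied with $A = \Phi(t,t-m)$ and $B = \Phi(t-m,k)q(k)$), and that multiplying a valid elementwise vector inequality on the left by the nonnegative matrix $|\Phi(t,t-m)|$ preserves it. Both facts are elementary, but they are what the whole argument rests on, so I would state them explicitly before running the induction. No stability or PE hypotheses are needed for this lemma; it is a purely algebraic comparison between the exact radius and its block-truncated surrogate.
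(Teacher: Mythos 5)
Your proof is correct. It rests on exactly the two facts the paper's proof uses --- the semigroup property $\Phi(t,k)=\Phi(t,t-m)\Phi(t-m,k)$ from \eqref{eq:Property-PHI} and the elementwise bound $|AB|\leq |A|\,|B|$, combined with the observation that elementwise inequalities between nonnegative vectors are preserved under left-multiplication by a nonnegative matrix --- but you package the argument differently. The paper fully unrolls the recursion \eqref{eq:rhat} down to the remainder time $\beta(t)=t-\alpha(t)m$, obtaining the explicit closed-form expression \eqref{eq:expression-Rtheta} for $\hat{r}_{\theta,m}(t)$ as a nested product of blocks $|\Phi(\kappa(t,\ell-1),\kappa(t,\ell))|$, and then collapses all those products at once via $|A||B|\geq|AB|$ to recover $r_\theta(t)$ as a lower bound. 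You instead peel off a single block of length $m$, show $r_\theta(t)\leq |\Phi(t,t-m)|\,r_\theta(t-m)+\sum_{k=t-m+1}^{t}|\Phi(t,k)q(k)|r_v(k)$, and close with strong induction (which is available since $1\leq t-m<t$ whenever $t>m$, with the base case $t\leq m$ holding by definition). Your one-step induction buys a substantially lighter proof with no $\alpha(t)$, $\beta(t)$, $\kappa(t,j)$ bookkeeping; the paper's explicit formula \eqref{eq:expression-Rtheta} is heavier but is reused verbatim in the proof of Theorem \ref{thm:m-star} to establish boundedness of $\hat{r}_{\theta,m}$, which is presumably why the authors derive it here. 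Your closing remark that no PE or stability hypotheses are needed is also accurate: the lemma is purely algebraic.
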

%%%
\begin{proof}
We start by observing that $\hat{r}_{\theta,m}(t)=r_{\theta}(t)$ for $t=0,\ldots,m$. Hence the inequality is true for $t=0,\ldots,m$. If $t>m$, write $t=\alpha(t)m+\beta(t)$ for some positive integers $(\alpha(t),\beta(t))$ such that $0\leq \beta(t)<m$. By iterating the second equation of  \eqref{eq:rhat}, we ultimately get 
%%%
$$\begin{aligned}
	\hat{r}_{\theta,m}(t)=&\Big(\prod_{\ell=1}^{\alpha(t)}\Big|\Phi\big(t-(\ell-1)m,t-\ell m\big)\Big|\Big) \hat{r}_{\theta,m}(\beta(t))\\
	&+\sum_{j=1}^{\alpha(t)}\: \sum_{k=t-jm+1}^{t-(j-1)m}\: \Big(\prod_{\ell=1}^{j-1}\Big|\Phi\big(t-(\ell-1)m,t-\ell m\big)\Big|\Big)\times \ldots\\
	 & \hspace{2.2cm}  \ldots \times \Big|\Phi\big(t-(j-1)m,k\big)q(k)\Big|r_v(k)
\end{aligned}
$$
Moreover, since $\beta(t)=t-\alpha(t)m$ satisfies $0\leq \beta(t)<m$, by the definition \eqref{eq:rhat} of $\hat{r}_{\theta,m}(t)$ we know that 
$$\hat{r}_{\theta,m}(\beta(t))=\left|\Phi(t-\alpha(t)m,0)\right|{r}_{\theta}(0)+ \sum_{k=1}^{t-\alpha(t)m} |\Phi(t-\alpha(t)m,k)q(k)|r_v(k).$$
 Plugging this in the above expression yields
%%%%%%%%%%%%%%%%%%%
\begin{equation}\label{eq:expression-Rtheta}
\begin{aligned}
	\hat{r}_{\theta,m}(t)=&\Big(\prod_{\ell=1}^{\alpha(t)+1}\Big|\Phi\big(\kappa(t,\ell-1),\kappa(t,\ell)\big)\Big|\Big) {r}_{\theta}(0)\\
	&+\sum_{j=1}^{\alpha(t)+1}\: \sum_{\kappa(t,j)+1}^{\kappa(t,j-1)}\: \Big(\prod_{\ell=1}^{j-1}\Big|\Phi\big(\kappa(t,\ell-1),\kappa(t,\ell)\big)\Big|\Big)\times \ldots \\
	& \hspace{2.2cm}  \ldots\times \Big|\Phi\big(\kappa(t,j-1),k\big)q(k)\Big|r_v(k)
\end{aligned}
\end{equation}
where $\kappa(t,j)=\max(t-jm,0)$. 
Using the property $|A||B|\geq |AB|$, we observe, for example, that the matrix in the first term of \eqref{eq:expression-Rtheta} can be bounded as follows
$$
\begin{aligned}
	 &\prod_{\ell=1}^{\alpha(t)+1}\Big|\Phi\big(\kappa(t,\ell-1),\kappa(t,\ell)\big)\Big|\\
	 &=\Big(\prod_{i=1}^{\alpha(t)}\Big|\Phi\big(t-(i-1)m,t-im\big)\Big|\Big) \left|\Phi(t-\alpha(t)m,0)\right|\\
	& \geq \Big|\Phi\big(t,t-m\big) \Phi\big(t-m,t-2m\big)\cdots \Phi\big(t-\alpha(t)m,0\big)\Big|\\
	& = \big|\Phi\big(t,0\big)\big|,
\end{aligned}
$$
where the last equality follows from the property \eqref{eq:Property-PHI} of the state transition matrix. Applying this property to the whole expression of  $\hat{r}_{\theta,m}(t)$ above leads to 
$$\begin{aligned}
	\hat{r}_{\theta,m}(t)&\geq  \big|\Phi\big(t,0\big)\big| r_{\theta}(0)
	+\sum_{j=1}^{\alpha(t)}\: \sum_{k=t-jm+1}^{t-(j-1)m}\: \big|\Phi\big(t,k\big)q(k)\big|r_v(k)\\
	& \hspace{2.5cm} +\sum_{k=1}^{t-\alpha(t)m} |\Phi(t,k)q(k)|r_v(k)\\
	& = \big|\Phi\big(t,0\big)\big| r_{\theta}(0)	+\sum_{k=1}^{t}\: \big|\Phi\big(t,k\big)q(k)\big|r_v(k)\\
	& = r_{\theta}(t)
\end{aligned}
$$
This concludes the proof. 
\end{proof}

\begin{thm}\label{thm:m-star}
Consider the system \eqref{eq:system} and assume that the regressor sequence $\left\{x(t)\right\}$ is  PE in the sense of Definition \ref{def:PE} and that the noise $\left\{v(t)\right\}$ is bounded and admits an interval representation $(c_v(t),r_v(t))$.  
Then there exists a number $m^\star>0$ such that for all $m\geq m^\star$, $\big(c_{\theta}(t), \hat{r}_{\theta,m}(t)\big)$ defined in \eqref{eq:interval-estimate} and \eqref{eq:rhat} constitutes an interval estimator for the parameter vector $\theta^\circ$.
\end{thm}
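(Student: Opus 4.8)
The plan is to check, for $m$ large enough, the two properties that Definition~\ref{def:Interval-Estimator} requires of the pair $\big(c_\theta(t),\hat r_{\theta,m}(t)\big)$: the inclusion $\theta^\circ\in\mathscr{I}\big(c_\theta(t),\hat r_{\theta,m}(t)\big)$ for all $t$ (whenever $\theta^\circ$ lies in the prior interval), and BIBO stability; the threshold $m^\star$ will be dictated solely by the second requirement. The inclusion needs no new argument: Proposition~\ref{prop:INT} already gives $\theta^\circ\in\mathscr{I}\big(c_\theta(t),r_\theta(t)\big)$ for all $t$, while Lemma~\ref{lem:r<=r'} gives $0\le r_\theta(t)\le\hat r_{\theta,m}(t)$, so enlarging the radius only enlarges the box, $\mathscr{I}\big(c_\theta(t),r_\theta(t)\big)\subseteq\mathscr{I}\big(c_\theta(t),\hat r_{\theta,m}(t)\big)$, and property~(a) is inherited for every $m\ge1$.

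Everything then reduces to BIBO stability. Since the centre map $(c_\theta(0),c_v)\mapsto c_\theta$ is untouched by the truncation and is already known to be BIBO stable (Proposition~\ref{prop:INT}; see also \eqref{eq:realization-c}), I would focus on the radius map $(r_\theta(0),r_v)\mapsto\hat r_{\theta,m}$. The idea is to read the second branch of \eqref{eq:rhat} for $t>m$ as an $m$-subsampled nonnegative recursion $\hat r_{\theta,m}(t)=|\Phi(t,t-m)|\,\hat r_{\theta,m}(t-m)+w(t)$ with $w(t)=\sum_{k=t-m+1}^{t}|\Phi(t,k)q(k)|\,r_v(k)\ge0$, and to turn its per-block gain into a genuine contraction by taking $m$ large. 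The crucial estimate is $\|\,|\Phi(t,t-m)|\,\|_F=\|\Phi(t,t-m)\|_F\le c\rho^{m}=c\lambda^{m/2}$, which follows at once from \eqref{eq:Exponential-Decay-PHI} (absolute values do not change the Frobenius norm). I would then take $m^\star$ to be the smallest positive integer with $c\lambda^{m^\star/2}<1$; for $m\ge m^\star$ one has $\eta:=c\lambda^{m/2}\in[0,1)$, and taking Euclidean norms in the recursion (using submultiplicativity of $\|\cdot\|_F$) yields $\|\hat r_{\theta,m}(t)\|_2\le\eta\,\|\hat r_{\theta,m}(t-m)\|_2+\bar w$. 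Here $\bar w$ is a finite bound on $\|w(t)\|_2$: the sum has at most $m$ terms, $\{q(t)\}$ is uniformly bounded (from $P(t)\preceq\gamma_1^{-1}I_n$ by Lemma~\ref{lem:bounded-InverseP} and $\|x(t)\|_2\le\sqrt\beta$ under Definition~\ref{def:PE}), $\Phi$ is bounded by \eqref{eq:Exponential-Decay-PHI}, and $r_v$ is bounded. Iterating the scalar inequality in steps of $m$ down to one of the finitely many start values $\hat r_{\theta,m}(0),\dots,\hat r_{\theta,m}(m)$ — each a finite sum, hence bounded via the same uniform bounds and bounded $r_\theta(0)$ — gives $\|\hat r_{\theta,m}(t)\|_2\le\max_{0\le\beta\le m}\|\hat r_{\theta,m}(\beta)\|_2+\bar w/(1-\eta)$, a uniform bound; so $\hat r_{\theta,m}$ is bounded whenever its inputs are, which is property~(b), and the proof is complete.

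The one genuinely non-routine point — and the reason the statement needs $m\ge m^\star$ rather than arbitrary $m$ — is that the one-step recursion for $\hat r_{\theta,1}$ is driven by $|A(t)|=|I_n-q(t)x(t)^\top|$, whose norm can exceed one, so no per-step contraction is available; the remedy is to pass to the $m$-step transition $\Phi(t,t-m)$, whose Frobenius norm decays like $\lambda^{m/2}$, and to pick $m$ large enough to dominate the constant $c=(n\gamma_2\gamma_1^{-1})^{1/2}$. The remaining work — assembling the uniform bounds on $q$, $\Phi$ and $r_v$ and summing a geometric series — is routine bookkeeping.
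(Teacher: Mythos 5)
Your proof is correct and follows essentially the same route as the paper's: the inclusion $\theta^\circ\in\mathscr{I}\big(c_\theta(t),\hat r_{\theta,m}(t)\big)$ via Lemma~\ref{lem:r<=r'} and Proposition~\ref{prop:INT}, and BIBO stability of the radius by blocking the recursion into length-$m$ segments, invoking $\left\|\Phi(t,t-m)\right\|_F\leq c\rho^{m}$ from \eqref{eq:Exponential-Decay-PHI} together with the uniform bounds on $q$ (Lemma~\ref{lem:bounded-InverseP}) and on $r_v$, and imposing $c\rho^{m}<1$, which yields the same threshold $m^\star$. The only difference is bookkeeping: the paper fully unrolls the recursion into \eqref{eq:expression-Rtheta} and sums the resulting geometric series, whereas you iterate a scalar contraction inequality; the two are equivalent.
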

%%%
\begin{proof}
By Lemma \ref{lem:r<=r'}, we know that $r_{\theta}(t)\leq \hat{r}_{\theta,m}(t)$. Hence it is immediate by Proposition \ref{prop:INT} that $\theta^\circ\in \interval{c_{\theta}(t)-\hat{r}_{\theta,m}(t)}{c_{\theta}(t)+\hat{r}_{\theta,m}(t)} \: \forall t\in \mathbb{Z}_+.$ We just need to establish the second condition of Definition \ref{def:Interval-Estimator}. More precisely, we need to show that the sequence of intervals $\interval{c_{\theta}(t)-\hat{r}_{\theta,m}(t)}{c_{\theta}(t)+\hat{r}_{\theta,m}(t)}$ is bounded. For this purpose we will show that the systems  $(c_{\theta}(0),c_v)\mapsto c_{\theta}$ and $(r_{\theta}(0),r_v)\mapsto \hat{r}_{\theta,m} $ are stable. 

\noindent\textit{Boundedness of $\left\{c_\theta(t)\right\}$}. 
Recall that a state-space realization of $c_\theta$ is given in \eqref{eq:realization-c}. Relying on this equation, it is clear that $(c_{\theta}(0),c_v)\mapsto c_{\theta}$ is BIBO state if and only if $(\tilde{\theta}(0),v)\mapsto \tilde{\theta}$ defined in \eqref{eq:error} is BIBO stable.  As already shown in the proof of Theorem \ref{thm:ISS-RLS}, the PE condition suffices to guarantee that $\tilde{\theta}$ is bounded whenever $(\tilde{\theta}(0),v)$ is bounded. We hence conclude that $(c_{\theta}(0),c_v)\mapsto c_{\theta}$ is BIBO. 
% the sequence of centers $\left\{c_\theta(t)\right\}$is bounded whenever $c_{\theta}(0)$ and $c_v$ are bounded.  

\noindent \textit{Boundedness of $\left\{\hat{r}_{\theta,m}(t)\right\}$}. 
We will rely on formula \eqref{eq:expression-Rtheta}. Taking the Frobenius norm of $\hat{r}_{\theta,m}(t)$ and applying 
 the submultiplicativity property of the Frobenius norm and the fact that $\left\||A|\right\|_F = \left\|A\right\|_F$ (i.e., Frobenius norm of $A$ equals Frobenius norm of $|A|$) for any matrix $A$, we can write
$$ \begin{aligned}
	\left\|\hat{r}_{\theta,m}(t)\right\|_2%&=\left\|\hat{r}_{\theta,m}(t)\right\|_F\\
	&\leq \Big(\prod_{i=1}^{\alpha(t)+1}\Big\|\Phi\big(\kappa(t,i-1),\kappa(t,i)\big)\Big\|_F\Big) \left\|{r}_{\theta}(0)\right\|_2\\
	&+\sum_{j=1}^{\alpha(t)+1}\!\! \sum_{k=\kappa(t,j)+1}^{\kappa(t,j-1)}\: \Big(\prod_{\ell=1}^{j-1}\Big\|\Phi\big(\kappa(t,\ell-1),\kappa(t,\ell)\big)\Big\|_F\Big)\times \ldots\\
	& \hspace{1.8cm}\ldots \times \Big\|\Phi\big(\kappa(t,j-1),k\big)\Big\|_F\left\|q(k)\right\|_2 \left\|r_v(k)\right\|_2
\end{aligned}$$ 
Here, we have used the fact that  $\left\|x\right\|_F=\left\|x\right\|_2$ for any vector $x$. 
Since the PE condition holds, it follows from the analysis of Section \ref{sec:ref-adaptive-identifier} (See Eq. \eqref{eq:Exponential-Decay-PHI}) that the transition matrix $\Phi$ satisfies 
$\left\|\Phi(t,t_0)\right\|_F\leq c\rho^{t-t_0}$ with the  constants $c>0$ and $\rho$ being  defined as in \eqref{eq:Exponential-Decay-PHI}. 
Applying this in the above inequality  gives
$$
\begin{aligned}
	\left\|\hat{r}_{\theta,m}(t)\right\|_2&\leq  (c\rho^m)^{\alpha(t)}(c\rho^{t-\alpha(t)m})\left\|{r}_{\theta}(0)\right\|_2\\
	& +\sum_{j=1}^{\alpha(t)}\: \sum_{k=t-jm+1}^{t-(j-1)m}(c\rho^m)^{j-1} (c\rho^{t-(j-1)m-k})\left\|q(k)\right\|_2\left\|r_v(k)\right\|_2\\
	& + \sum_{k=1}^{t-\alpha(t)m} (c\rho^m)^{\alpha(t)}(c\rho^{t-\alpha(t)m-k})\left\|q(k)\right\|_2  \left\|r_v(k)\right\|_2
\end{aligned}
$$  
\noindent Under the PE condition of $\left\{x(t)\right\}$, we know by Lemma \ref{lem:bounded-InverseP} that $\left\{P(t)\right\}$ is uniformly bounded as $1/\gamma_2I_n\preceq P(t)\preceq 1/\gamma_1I_n$ for all $t$. From this, it is easy to see that the vector $q(t)$ defined  in \eqref{eq:update-q} satisfies 
\begin{equation}\label{eq:bound-q}
	\left\|q(k)\right\|_2\leq \dfrac{1/\gamma_1 \sup_{t}\left\|x(t)\right\|_2}{\lambda+1/\gamma_2\inf_t\left\|x(t)\right\|_2^2}. 
\end{equation}
 This implies  that $\left\{q(t)\right\}$ is upper-bounded. On the other hand, $r_v$ is bounded by assumption.     
%Should show that $\left\{q(k)\right\}$ is upper-bounded..... 
Let therefore consider the bounds $\eta_q=\sup_{k\in \mathbb{Z}_+} \left\|q(k)\right\|_2$ and $\eta_v = \sup_{k\in \mathbb{Z}_+} \left\|r_v(k)\right\|_2$.  Using these notations and proceeding from above gives 
$$
\begin{aligned}
		\left\|\hat{r}_{\theta,m}(t)\right\|_2\leq  (c\rho^m)^{\alpha(t)}c\left\|{r}_{\theta}(0)\right\|_2
	&+c \eta_q\eta_v \dfrac{1-\rho^{m}}{1-\rho}\dfrac{1-(c\rho^m)^{\alpha(t)}}{1-c\rho^m}\\
	 &+ c \eta_q\eta_v (c\rho^m)^{\alpha(t)}\dfrac{1-\rho^{t-\alpha(t)m}}{1-\rho}
\end{aligned}
$$  
This inequality can be refined as 
\begin{equation}
\begin{aligned}
		\left\|\hat{r}_{\theta,m}(t)\right\|_2	\leq  &(c\rho^m)^{\alpha(t)}c\left\|{r}_{\theta}(0)\right\|_2\\
		&+\dfrac{c \eta_q\eta_v}{1-\rho} \left[\dfrac{1-\rho^{m}}{1-c\rho^m}+  (c\rho^m)^{\alpha(t)}(1-\rho^{m-1})\right].
\end{aligned}
\end{equation}
Note that\footnote{$\lfloor \cdot\rfloor$ refers to the floor function (integer part).} $\alpha(t)=\left\lfloor t/m\right\rfloor \rightarrow +\infty$ as $t\rightarrow +\infty$. 
Hence, if $c\rho^m<1$, that is, if $m>m^\star \triangleq -\dfrac{\ln(c)}{\ln(r)}$, then the sequence $\left\{\hat{r}_{\theta,m}(t)\right\}$ is bounded. 
\end{proof}

A candidate for the constant $c$  is the one expressed in \eqref{eq:Exponential-Decay-PHI}. 
By making use of it, a full expression of $m^\star$ can be obtained as
	\begin{equation}\label{eq:m-star}
		m^\star =  -\dfrac{\ln\left(n\gamma_2\gamma_1^{-1}\right)}{\ln(\lambda)}
	\end{equation}
This suggests  that the richer the regressor sequence $\left\{x(t)\right\}$ (that is, the smaller the ratio $\gamma_1/\gamma_2$), the smaller the threshold $m^\star$ will be. Note indeed that $\gamma_1$ and $\gamma_2$ depend not only on the data sequence $\left\{x(t)\right\}$ but also on the forgetting factor $\lambda$ and the initial weighting matrix $P^{-1}(0)$.  
A few further comments can be made concerning the behavior of $\hat{r}_{\theta,m}$. 
First, note that an asymptotic bound on the estimated interval radius can be derived as follows
\begin{equation}
	\limsup_{t\rightarrow+\infty}\left\|\hat{r}_{\theta,m}(t)\right\|_2\leq \dfrac{c \eta_q\eta_v}{1-\rho} \dfrac{1-\rho^{m}}{1-c\rho^m}. 	
\end{equation}
Then we see that as the truncation order $m$ grows, the asymptotic bound on $\hat{r}_{\theta,m}(t)$ gets closer to $b_{\infty}^\star\triangleq c\eta_q\eta_v/(1-\rho)$. 
By invoking Eq. \eqref{eq:bound-q} it is immediate to see that if we let $h_{\min}=\inf_{t}\left\|x(t)\right\|_2$ and $h_{\max}=\sup_{t}\left\|x(t)\right\|_2$,  then 
$\eta_q\leq \frac{\gamma_2}{\gamma_1} \frac{h_{\max}}{h_{\min}^2+\lambda \gamma_2}$
which, by using the expressions of $c$ and $\rho$ given in \eqref{eq:Exponential-Decay-PHI},  implies that 
\begin{equation}
	b_{\infty}^\star\leq \dfrac{\eta_v n^{1/2}}{1-\lambda^{1/2}} \left(\dfrac{\gamma_2}{\gamma_1}\right)^{3/2}\dfrac{h_{\max}}{h_{\min}^2+\lambda\gamma_2}.  
\end{equation}
What this shows is that the influencing parameters of  the bound $b_{\infty}^\star$ originates from three sources: (i) 
 the parameters  measuring  richness  of the learning data: $\gamma_1, \gamma_2, h_{\min}, h_{\max}$; (ii)
	the design parameters of the estimator: $\lambda$, $P^{-1}(0)$;
	(iii) the magnitude  $\eta_v$ of the uncertainty associated with the mathematical representation \eqref{eq:system} of the data.  
%

%%%%%%%%%%%%%%%%%
\subsection{Further improvements }\label{subsec:monotonic}
Due to the presence of noise in the data, the size of the interval estimates $(c_\theta,r_{\theta})$ or $(c_\theta,\hat{r}_{\theta,m})$ discussed above may oscillate over time instead of decreasing monotonically (See Figure \ref{fig:Int} for a visual illustration of this phenomenon). This behavior is undesirable  in practice and should be mitigated  as much as possible.  
For this purpose, we discuss here a simple recursive intersection operation for removing such possible non monotonic trend of the interval-valued estimate for the estimators proposed in the previous sections. To this end, consider a pair $(\underline{\xi},\overline{\xi}):\mathbb{Z}_+\rightarrow\Re^n\times \Re^n$  such that the to-be-estimated parameter vector $\theta^\circ$ lies in $\interval{\underline{\xi}(t)}{\overline{\xi}(t)}$ for all $t$. 
%
%%%
Define the pair of vector-valued functions $(\underline{\theta},\overline{\theta}):\mathbb{Z}_+\rightarrow\Re^n\times \Re^n$   such that $\theta^\circ\in \interval{\underline{\theta}(0)}{\overline{\theta}(0)}$ and for all $t\geq 1$, 
\begin{align}
& \underline{\theta}(t)=\max\big(\underline{\theta}(t-1),\underline{\xi}(t)\big)\label{eq:theta-inf} \\
& \overline{\theta}(t)=\min\big(\overline{\theta}(t-1),\overline{\xi}(t)\big), \label{eq:theta-sup}
\end{align}
where the minimum/maximum operators apply componentwise, i.e., when $x$ and $y$ are vectors of the same dimension,  $\min(x,y)$ refers to the vector whose entries are given by $\min(x_i,y_i)$. We will call $(\underline{\xi},\overline{\xi})$ the input of the dynamic system \eqref{eq:theta-inf}-\eqref{eq:theta-sup} and $(\underline{\theta},\overline{\theta})$ its state. In fact \eqref{eq:theta-inf}-\eqref{eq:theta-sup} is equivalent to $\interval{\underline{\theta}(t)}{\overline{\theta}(t)}=\interval{\underline{\theta}(t-1)}{\overline{\theta}(t-1)}\cap \interval{\underline{\xi}(t)}{\overline{\xi}(t)}$. 

\noindent We now state some basic properties of the estimator \eqref{eq:theta-inf}-\eqref{eq:theta-sup}. 

\begin{lem}\label{lem:Monotonic}
Assume $\underline{\theta}(0)\leq \overline{\theta}(0)$  and $\underline{\xi}(t)\leq \overline{\xi}(t)$ for all $t$. Then the following facts are true: 
\begin{enumerate}
	\item Boundedness: $\underline{\theta}(0)\leq \underline{\theta}(t)\leq \overline{\theta}(t)\leq \overline{\theta}(0)$ $\forall t\geq 0$
	\item  Monotonically decreasing widths: $\interval{\underline{\theta}(t)}{\overline{\theta}(t)}\subset \interval{\underline{\theta}(k)}{\overline{\theta}(k)}$ $\forall (k,t)$ such that $k\leq t$. 
	\item Convergence: The sequences $\left\{\underline{\theta}(t)\right\}$ and $\left\{\overline{\theta}(t)\right\}$ converge to $\underline{\theta}^*$ and $\overline{\theta}^*$ respectively with 
	$$
	\begin{aligned}
		&\underline{\theta}^*\triangleq \max\big(\underline{\theta}(0),\max_t\underline{\xi}(t)\big)\\
		&\overline{\theta}^*\triangleq \min\big(\overline{\theta}(0),\min_t\overline{\xi}(t)\big)
	\end{aligned}
	$$
	If $\max_t\underline{\xi}(t)\leq \underline{\theta}(0)$ and $\min_t\overline{\xi}(t)\leq \overline{\theta}(0)$, then the input sequence $\left\{\underline{\xi}(t),\overline{\xi}(t)\right\}$ does not bring any information since in this case $\underline{\theta}(t)=\underline{\theta}(0)$ and $\overline{\theta}(t)=\overline{\theta}(0)$ for all $t$.
	\item If $\theta^\circ\in \interval{\underline{\theta}(0)}{\overline{\theta}(0)}\cap \interval{\underline{\xi}(t)}{\overline{\xi}(t)}$ for all $t$, then \\$\theta^\circ \in \bigcap_{t=0}^{\infty}\interval{\underline{\theta}(t)}{\overline{\theta}(t)}$. 
\end{enumerate}
\end{lem}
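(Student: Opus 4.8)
The plan is to prove the four facts essentially in the order stated, since each one builds on the previous. The whole argument rests on the single elementary observation that, because the update \eqref{eq:theta-inf}-\eqref{eq:theta-sup} is a componentwise max/min against a running state, the state sequences are monotone and sandwiched. I would set up the induction on $t$ carefully once and then harvest all four items from it.

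\textbf{Fact 1 (Boundedness).} First I would show by induction on $t$ that $\underline{\theta}(0)\leq\underline{\theta}(t)$ and $\overline{\theta}(t)\leq\overline{\theta}(0)$, and simultaneously that $\underline{\theta}(t)\leq\overline{\theta}(t)$. The first two are immediate from $\underline{\theta}(t)=\max(\underline{\theta}(t-1),\underline{\xi}(t))\geq\underline{\theta}(t-1)$ and $\overline{\theta}(t)=\min(\overline{\theta}(t-1),\overline{\xi}(t))\leq\overline{\theta}(t-1)$, chained down to $t=0$. For $\underline{\theta}(t)\leq\overline{\theta}(t)$, I would argue componentwise: $\max(\underline{\theta}_i(t-1),\underline{\xi}_i(t))\leq\min(\overline{\theta}_i(t-1),\overline{\xi}_i(t))$ holds because each of the two candidates on the left is $\leq$ each of the two candidates on the right --- $\underline{\theta}_i(t-1)\leq\overline{\theta}_i(t-1)$ by the induction hypothesis, $\underline{\theta}_i(t-1)\leq\overline{\theta}_i(0)$ wait, more directly $\underline{\theta}_i(t-1)\leq\overline{\theta}_i(t)$ needs care, so I would instead just note $\underline{\theta}_i(t-1)\leq\overline{\theta}_i(t-1)$, $\underline{\theta}_i(t-1)\leq\overline{\xi}_i(t)$ (since $\underline{\theta}_i(t-1)$... hmm) --- the cleanest route is to use that $\theta^\circ$ lies in both intervals, so $\underline{\theta}_i(t-1)\leq\theta^\circ_i\leq\overline{\xi}_i(t)$ and $\underline{\xi}_i(t)\leq\theta^\circ_i\leq\overline{\theta}_i(t-1)$, which is exactly the four required inequalities. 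This is the step where I expect the only real subtlety, and invoking the containment of $\theta^\circ$ (which is assumed in the hypotheses of Lemma \ref{lem:Monotonic} via $\theta^\circ\in\interval{\underline{\theta}(0)}{\overline{\theta}(0)}$ and $\theta^\circ\in\interval{\underline{\xi}(t)}{\overline{\xi}(t)}$, the latter coming from the standing assumption on $(\underline{\xi},\overline{\xi})$) resolves it cleanly.

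\textbf{Fact 2 (Monotone widths)} and \textbf{Fact 4 (containment)} then follow almost immediately. For Fact 2, the equivalent form $\interval{\underline{\theta}(t)}{\overline{\theta}(t)}=\interval{\underline{\theta}(t-1)}{\overline{\theta}(t-1)}\cap\interval{\underline{\xi}(t)}{\overline{\xi}(t)}\subseteq\interval{\underline{\theta}(t-1)}{\overline{\theta}(t-1)}$ gives one-step nesting, and transitivity of $\subseteq$ extends it to any $k\leq t$. For Fact 4, unrolling the recursion gives $\interval{\underline{\theta}(t)}{\overline{\theta}(t)}=\interval{\underline{\theta}(0)}{\overline{\theta}(0)}\cap\bigcap_{s=1}^{t}\interval{\underline{\xi}(s)}{\overline{\xi}(s)}$; since $\theta^\circ$ belongs to every intersectand by hypothesis, it belongs to every finite intersection, hence to $\bigcap_{t\geq0}\interval{\underline{\theta}(t)}{\overline{\theta}(t)}$.

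\textbf{Fact 3 (Convergence).} Here I would observe that $\{\underline{\theta}_i(t)\}$ is, for each coordinate $i$, nondecreasing (Fact 1) and bounded above by $\overline{\theta}_i(0)$, hence convergent; likewise $\{\overline{\theta}_i(t)\}$ is nonincreasing and bounded below. To identify the limit, note that unrolling the max gives $\underline{\theta}_i(t)=\max\big(\underline{\theta}_i(0),\max_{1\leq s\leq t}\underline{\xi}_i(s)\big)$, and letting $t\to\infty$ yields $\underline{\theta}_i^*=\max\big(\underline{\theta}_i(0),\sup_{s}\underline{\xi}_i(s)\big)$; symmetrically for $\overline{\theta}^*$. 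The final remark of Fact 3 is then just the observation that if $\sup_s\underline{\xi}_i(s)\leq\underline{\theta}_i(0)$ the max never changes value, so $\underline{\theta}(t)\equiv\underline{\theta}(0)$, and dually for the upper bound. I would write ``$\max_t$'' as in the statement, understanding it as a supremum when not attained. The main obstacle, as noted, is purely the well-definedness of the intersection (i.e., $\underline{\theta}(t)\leq\overline{\theta}(t)$), which is why I would front-load the $\theta^\circ$-containment argument; everything after that is routine bookkeeping with monotone sequences and nested intervals.
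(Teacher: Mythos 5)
Your proof is correct and follows essentially the same route as the paper's, which likewise disposes of facts 1, 2 and 4 as immediate consequences of the max/min update and invokes the monotone convergence theorem for fact 3. If anything, you are more careful than the paper's one-line treatment: your observation that the inequality $\underline{\theta}(t)\leq\overline{\theta}(t)$ is \emph{not} a consequence of the two listed hypotheses alone, but requires the standing assumption that $\theta^\circ$ lies in $\interval{\underline{\theta}(0)}{\overline{\theta}(0)}$ and in every $\interval{\underline{\xi}(t)}{\overline{\xi}(t)}$ (otherwise the running intersection can become empty), is a genuine subtlety that the paper glosses over.
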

\begin{proof}
The facts 1, 2 and 4 are quite immediate. To see why fact 3 holds, note that $\left\{\underline{\theta}(t)\right\}$ is (componentwise) nondecreasing and upper-bounded while $\left\{\overline{\theta}(t)\right\}$ is nonincreasing and lower-bounded. Hence by the monotone convergence theorem, both sequences are convergent and their limits are the maximal element $\underline{\theta}^*$ and minimal element $\overline{\theta}^*$ of the respective sequences as expressed above.  
% proofs of these facts are straightforward and therefore omitted........................
\end{proof}
%%%
\begin{rem}
In virtue of the properties stated in Lemma \ref{lem:Monotonic}, the estimator in \eqref{eq:theta-inf}-\eqref{eq:theta-sup} is naturally robust to potential outliers in the sequence $\left\{(\underline{v}(t),\overline{v}(t))\right\}$ of bounds on the equation errors in \eqref{eq:system}.  
\end{rem}

\section{Application to a time-varying system}\label{sec:LTV}
We now consider the case where the true parameter vector $\theta^\circ$ in \eqref{eq:system} is no longer constant but may be time varying with a limited rate of change.  
Let us pose 
 \begin{equation}\label{eq:dynamics-theta0}
	 \theta^\circ(t)=\theta^\circ(t-1)+\delta(t),
 \end{equation}
where $\left\{\delta(t)\right\}$ is unknown but assumed to be bounded in an interval. More precisely, we assume that we know a sequence $\left\{\mathscr{I}(c_\delta(t),r_\delta(t))\right\}$ of intervals such that 
$\delta(t)\in \mathscr{I}(c_\delta(t),r_{\delta}(t))$ for all $t\in \mathbb{Z}_+$.  
Let us still use the notation $\tilde{\theta}(t)$ to refer to the parametric error now defined by 
 $\tilde{\theta}(t)=\theta(t)-\theta^\circ(t)$ with $\theta(t)$ generated as in \eqref{eq:update-theta} from the data. 
It can then be shown that the error dynamics take the form  
\begin{equation}\label{eq:theta-tilde}
\tilde{\theta}(t) = A(t)\tilde{\theta}(t-1) +B(t)\bar{v}(t)
	%\tilde{\theta}(t) = A(t)\tilde{\theta}(t-1) +q(t)\big(v(t)+x(t)^\top \delta(t)\big)-\delta(t).  
\end{equation}
with $A(t)=I_n-q(t)x(t)^\top$ as in \eqref{eq:error} and 
\begin{equation}\label{eq:v-bar}
	%\bar{v}(t)=q(t)\big(v(t)+x(t)^\top \delta(t)\big)-\delta(t) 
\begin{aligned}
		&B(t)=\bbm q(t) & -A(t)\eem \\
		&\bar{v}(t)=\bbm v(t) & \delta(t)^\top \eem^\top %q(t)v(t)-A(t)\delta(t) 
\end{aligned}
\end{equation}
Note in passing that one recovers the error dynamics \eqref{eq:error} from \eqref{eq:theta-tilde}  when $\delta(t)=0$ for all $t$, that is, when $\theta^\circ(t)$ is assumed constant. 
Now an interval representation of  $\bar{v}(t)$ in \eqref{eq:theta-tilde} is given by
\begin{equation}
	\left\{\begin{aligned}
		& c_{\bar{v}}(t)  = \bbm c_v(t) & c_{\delta}(t)\eem^\top \\
		& r_{\bar{v}}(t)=\bbm r_v(t) & r_{\delta}(t)\eem^\top.  
	\end{aligned}\right. 
\end{equation}
The relation \eqref{eq:theta-tilde} is key for deriving an interval-valued estimator. 
In effect, by relying on it and following the preceding discussions, it is easy to obtain, under the PE condition, an interval-valued estimator for the vector-valued sequence $\left\{\theta^\circ(t)\right\}$. 
More precisely, the complete form of the estimator is  $\mathscr{I}(c'_\theta(t),r'_{\theta}(t))=\interval{c'_{\theta}(t)-r'_{\theta}(t)}{c'_{\theta}(t)+r'_{\theta}(t)}$, with center $c'_\theta$ defined by the state-space equation
\begin{equation}\label{eq:c-LTV}
	%\left\{\begin{aligned}
	%&\varphi(t)=A(t)\varphi(t-1)+B(t)c_{\bar{v}}(t),  \: \: \varphi(0) = \theta(0)-c_\theta(0)\\
	%& c'_{\theta}(t)=\theta(t)-\varphi(t)%\\
	%%& r_{\theta}(t)=|\Phi(t,0)|r_{\theta}(0)+\sum_{j=1}^{t}|\Phi(t,j)B(j)|r_{\bar{v}}(j) 
	%\end{aligned}\right.
	c'_{\theta}(t)=A(t)c'_{\theta}(t-1)+q(t)\big(y(t)-c_v(t)\big)+A(t)c_{\delta}(t), 
\end{equation}
$	c'_{\theta}(0)=c_\theta(0)$, and radius $r'_\theta$ given in convolution form by 
\begin{equation}\label{eq:r-LTV}
		 r'_{\theta}(t)=|\Phi(t,0)|r_{\theta}(0)+\sum_{j=1}^{t}|\Phi(t,j)B(j)|r_{\bar{v}}(j).  
\end{equation}
Recall that in \eqref{eq:c-LTV},  $\left\{\theta(t)\right\}$ still refers to the sequence generated by the point-valued RLS identifier \eqref{eq:update-theta}-\eqref{eq:update-P}. Likewise, $\Phi$ is the RLS transition matrix expressed in \eqref{eq:PHI}. 
As to the truncated form of the estimator, it now admits the expression $\mathscr{I}(c'_\theta(t),\hat{r}'_{\theta,m}(t))=\interval{c'_{\theta}(t)-\hat{r}'_{\theta,m}(t)}{c'_{\theta}(t)+\hat{r}'_{\theta,m}(t)}$
with $c'_\theta$ as in \eqref{eq:c-LTV} and $\hat{r}_{\theta,m}(t)$ defined by 
\begin{equation}\label{eq:rhat-2}
		\hat{r}'_{\theta,m}(t)=
	\left\{	\begin{array}{ll} 
	\displaystyle	\left|\Phi(t,0)\right|{r}_{\theta}(0)+ \sum_{k=1}^{t} |\Phi(t,k)B(k)|r_{\bar{v}}(k), \\
	%\mbox{ if }t\in \interval{0}{m} \\ 
	 \mbox{ if } t=0,\ldots,m\\
	\displaystyle	\left|\Phi(t,t-m)\right|\hat{r}'_{\theta,m}(t-m)+ \sum_{k=t-m+1}^{t} |\Phi(t,k)B(k)|r_{\bar{v}}(k), \\ \mbox{ if } t> m
		\end{array}\right.
\end{equation}

\noindent Finally, let us remark that it is possible, similarly as in Section \ref{subsec:monotonic}, to derive improved  versions of the above interval-valued estimators for the case of time-varying systems. For this purpose, consider any pair $(\underline{\xi},\overline{\xi})$  of functions such that $\underline{\xi}(t)\leq \overline{\xi}(t)$ and $\theta^\circ(t)\in \interval{\underline{\xi}(t)}{\overline{\xi}(t)}$ for all $t\geq 0$. Then by letting $(\underline{p},\overline{p})$ be defined by 
\begin{align}
& \underline{p}(t)=\max\big(\underline{p}(t-1)+\underline{\delta}(t),\underline{\xi}(t)\big)\label{eq:theta-inf2} \\
& \overline{p}(t)=\min\big(\overline{p}(t-1)+\overline{\delta}(t),\overline{\xi}(t)\big), \label{eq:theta-sup2}
\end{align}
with $\underline{\delta}(t)=c_{\delta}(t)-r_{\delta}(t)$ and $\overline{\delta}(t)=c_{\delta}(t)+r_{\delta}(t)$, it holds that  $\theta^\circ(t)\in \interval{\underline{p}(t)}{\overline{p}(t)}$ for all $t\geq 0$  provided that $\theta^\circ(0)\in \interval{\underline{p}(0)}{\overline{p}(0)}$. Moreover, $(\underline{p},\overline{p})$ is bounded provided that $(\underline{\xi},\overline{\xi})$ is bounded. Of course the inputs  $(\underline{\xi},\overline{\xi})$  of  \eqref{eq:theta-inf2}-\eqref{eq:theta-sup2} can be taken to be any of the  estimates $\big(c'_\theta-r'_\theta,c'_\theta-r'_\theta\big)$  in \eqref{eq:c-LTV}-\eqref{eq:r-LTV} or $\big(c'_\theta-\hat{r}'_{\theta,m},c'_\theta-\hat{r}'_{\theta,m}\big)$ with $\hat{r}'_{\theta,m}$ defined in \eqref{eq:rhat-2}. 
%%%%%%%%%%%%%%%%%%%%%%%%%%%%%%%%%%%%%%%%%%%%%%%%%%%%%%%%%%%

%%%%%%%%%%%%%%%%%%%%%%%%%%%%%%%%%%%%%%%%%%%%%%

\section{Some simulation results}\label{sec:simulations}
\subsection{Linear Time Invariant system}
To illustrate the performance of the proposed estimators, we first consider a dynamical LTI system described by a model of the form \eqref{eq:system} 
where $\theta^\circ=\bbm -1.40 & 0.75 & 0.60 & -0.10\eem^\top \in \Re^4$ 
 and $x(t)=\bbm -y(t-1) & -y(t-2) & u(t-1) & u(t-2)\eem^\top\in \Re^4$ with the input $\left\{u(t)\right\}$ being generated as the realization of a zero-mean white Gaussian noise with unit variance. As to the  noise sequence $\left\{v(t)\right\}$, it is uniformly sampled from an interval of the form $\interval{-a}{a}$ with $a=0.2$. In these conditions, we consider an estimation horizon of length $N=200$ data points and compute the interval-valued parameter estimates described in \eqref{eq:interval-estimate}. The initial parameter set $\mathscr{I}(c_\theta(0),r_\theta(0))$ is selected such that $c_\theta(0)=0$ and $r_\theta(0) =\alpha_0\bm{1}_n $ with $\alpha_0=4$ and  $n=4$ here and $\bm{1}_n$ being a $n$-dimensional vector of ones. The reference RLS algorithm  \eqref{eq:update-theta}-\eqref{eq:update-P} is run with initial value $\theta(0)=0$, covariance matrix $P(0)=10^3 I_4$ and  forgetting factor $\lambda=0.99$.

\paragraph{Evaluations of the preliminary estimators}
Considering the truncated estimates \eqref{eq:realization-c}-\eqref{eq:rhat}, we start by recalling that, as established by Theorem \ref{thm:m-star}, there is a minimum value of the horizon $m$ beyond which boundedness of the estimate can be hoped for. With the experimental setting described above, a minimal  such value is empirically found to be about  $10$ for most realizations of the input-output data.

 Figure \ref{fig:Int} below presents the interval-valued parameter estimates for this example when applying the estimators described  \eqref{eq:realization-c}-\eqref{eq:rhat} for $m\in \left\{20,50,N\right\}$. Note that the case $m=N$ with $N$ being the entire estimation horizon generated an interval radius $\hat{r}_{\theta,m}$ such that $\hat{r}_{\theta,m}=r_\theta$ (See Eq. \eqref{eq:interval-estimate}). The results confirm that the estimate $\mathscr{I}(c_{\theta}(t),r_\theta(t))$ defined in \eqref{eq:interval-estimate} is tighter than  the truncated forms $\mathscr{I}(c_{\theta}(t),\hat{r}_{\theta,m}(t))$ for $m<N$. Moreover, the larger the truncation horizon $m$, the tighter $\mathscr{I}(c_{\theta}(t),\hat{r}_{\theta,m}(t))$.  
%%%%%%%%%%%%%%%%
\begin{figure}[h]%
\centering
\psfrag{N20}{\scriptsize$m=20$}
\psfrag{N50}{\scriptsize$m=50$}
\psfrag{N}{\scriptsize$m=N$}
\psfrag{True}{\scriptsize True}
\psfrag{t}{\scriptsize Time}
\psfrag{theta}{\scriptsize Estimates}
\includegraphics[width=.8\columnwidth,height=.15\paperheight]{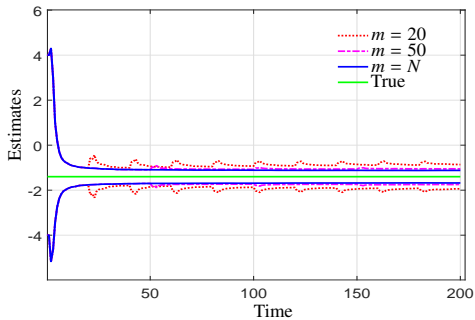}%
\caption{Interval-valued parameter estimates for the first entry of $\theta^\circ$ (averaged over $100$ independent runs). Truncated estimates $(c_\theta,\hat{r}_{\theta,m})$ with $\hat{r}_{\theta,m}$ as in \eqref{eq:rhat} for $m=20$ (dotted red), $m=50$ (dashed magenta), $m=N$ (solid blue) and the true (constant) parameter $\theta^\circ$ (solid green).  }%
\label{fig:Int}%
\end{figure}

\paragraph{Monotonically improved estimators}
To illustrate the benefit of the monotonic operators proposed in Section \ref{subsec:monotonic}, we apply them to the three previous estimators, that is, the estimates are now computed by using  $(c_\theta,\hat{r}_{\theta,m})$ as input in Eqs \eqref{eq:theta-inf}-\eqref{eq:theta-sup}, with $m\in \left\{10,N\right\}$. Again recall that $\hat{r}_{\theta,N}=r_\theta$. The associated results are plotted in Figure \ref{fig:Int-Monotonic}. As argued before, we can see that the obtained estimates  are  smoother and tighter compared to those of Figure \ref{fig:Int}.  Moreover, they effectively generate intervals with monotonically decreasing (nonincreasing) widths. 
\rev{For comparison purpose, we have also represented estimates\footnote{\rev{Note that no line is visible in the time interval $[0,20]$ because the first $20$ samples are used here to initialize the algorithm. }} obtained by the method described in \cite{Gutman94}. As it turns out, our estimator gives tighter estimates. We will see in the next paragraph that more tightness can be gained by using a smaller forgetting factor $\lambda$. }
% we observe that the monotonic estimators   
%%%
\begin{figure}[h]%
\centering
\psfrag{10}{\scriptsize$m=10$}
\psfrag{N}{\scriptsize$m=N$}
\psfrag{True}{\scriptsize True}
\psfrag{Gutman}{\scriptsize Ref \cite{Gutman94}}
\psfrag{t}{\scriptsize Time}
\psfrag{theta}{\scriptsize Estimates $(\underline{\theta},\overline{\theta})$}
\includegraphics[width=.77\columnwidth,height=.15\paperheight]{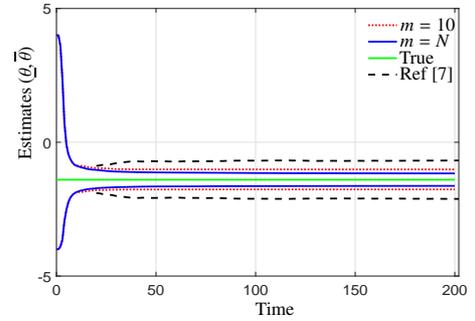}%%{Fig/Min_Estimate_H20-50-N.eps}%
\caption{Interval-valued  estimates (averaged over $100$ independent runs)   given by the estimator \eqref{eq:theta-inf}-\eqref{eq:theta-sup} with $\overline{\xi}=c_\theta+\hat{r}_{\theta,m}$, $\underline{\xi}=c_\theta-\hat{r}_{\theta,m}$ and $\hat{r}_{\theta,m}$ as in \eqref{eq:rhat} for $m=10$ (dotted red),  $m=N$ (solid blue) :  true (constant) parameter $\theta^\circ$ (solid green) and estimates given by the method of \cite{Gutman94} (dashed black).  }%
\label{fig:Int-Monotonic}%
\end{figure}

\paragraph{Influence of the RLS forgetting factor}
As can be intuitively guessed, the performance of the proposed interval-valued estimators depends on the properties of the RLS reference identifier which in turn are determined by the richness of the data and the user-defined parameters such as $P(0)$ and $\lambda$. In particular, it is interesting to study the impact of the forgetting factor $\lambda\in \interval[open left]{0}{1}$. In general, for point-valued estimation, such a parameter is selected, to be close to $1$ in order to smooth the trajectories of $\theta$ in \eqref{eq:update-theta}. In contrast,  the  recursive interval-valued estimator \eqref{eq:theta-inf}-\eqref{eq:theta-sup} tends to perform better when $\lambda$ is small.     To see this consider Figure \ref{fig:Error-vs-lambda} where we have plotted the final interval width $\overline{\theta}(N)-\underline{\theta}(N)$ achieved by the estimator \eqref{eq:theta-inf}-\eqref{eq:theta-sup}. Again only the estimates related to the first component of the parameter vector are represented. We consider the estimator  \eqref{eq:theta-inf}-\eqref{eq:theta-sup} with $\underline{\xi} =c_\theta- \hat{r}_{\theta,m}$ and $\overline{\xi} =c_\theta+\hat{r}_{\theta,m}$  as defined in \eqref{eq:interval-estimate} and \eqref{eq:rhat} for $m\in \left\{20,50,N\right\}$. The results are indeed averages over $100$ independent simulations. What this reveals is that the estimator's asymptotic performance depends on the forgetting factor in the sense that the width of the estimated interval is all the smaller as the forgetting factor $\lambda$ is small. This behavior can be explained by the fact that a small $\lambda$ in the RLS may cause the estimates $(c_\theta,\hat{r}_{\theta,m})$ to fluctuate substantially hence favoring the event that the associated interval jumps occasionally to a small value.  We can further observe that for small values of the forgetting factor (e.g., $\lambda\leq 0.6$ in Figure \ref{fig:Error-vs-lambda}),  all truncation orders $m$ tend to perform equally well. This suggests an important feature of the proposed estimation framework for practical implementation: provided the exciting input $\left\{u(t)\right\}$ is sufficiently rich and  $\lambda$ is then taken  small enough, the computational complexity of the estimators can be reduced to the minimum by selecting a small truncation horizon $m$.   
%%%%%
\begin{figure}[h!]%
\centering
\psfrag{N10}{\scriptsize$m=20$}
\psfrag{N4}{\scriptsize$m=50$}
\psfrag{N}{\scriptsize$m=N$}
\psfrag{lambda}{\footnotesize$\lambda$}
\psfrag{Error}{\scriptsize Error $\overline{\theta}(N)-\underline{\theta}(N)$}
\includegraphics[width=.78\columnwidth,height=.14\paperheight]{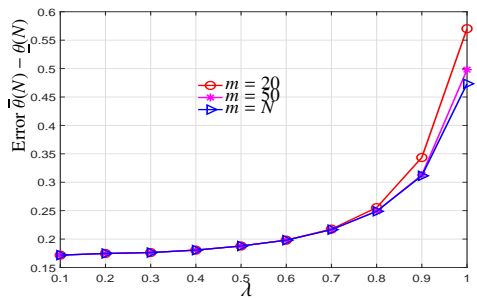}%
\caption{Widths $\overline{\theta}(N)-\underline{\theta}(N)$ (averaged over $100$ independent runs) of the estimated parameter intervals versus forgetting factor $\lambda$. Only the first components of the parametric  (vector-valued) error  are represented for truncated estimators of the form \eqref{eq:rhat} for $m=20$ (dashed red), $m=50$ (dashed magenta), $m=N$ (solid blue) and the true (constant) parameter $\theta^\circ$ (solid green).  }%
\label{fig:Error-vs-lambda}%
\end{figure}
%%%%%%%%%%%%%%%%%%%%%%%%%%%%%%%%%%%%%%%%%%%%%%%%%%%%%%%%%%%%%%%%%%%%%
\subsection{Linear Time Varying system}
%\paragraph{A time-varying example}
We now consider a model of the form \eqref{eq:system} where the parameter vector $\theta^\circ$ is time-varying with dynamics defined as in \eqref{eq:dynamics-theta0} where it is assumed that $\delta(t)$ belongs to an interval given by $c_\delta(t)=0$ and  $r_\delta(t) =\bbm 0.10 & 0.05& 0.04 & 0.01\eem^\top $ for all $t$.  For the simulation, we generate a sequence $\left\{\delta(t)\right\}$ in this interval such that $\delta(t)=r_\delta(t)\sin(2\pi t/30)$.
%%%
\rev{
 The other settings remain the same as previously defined in the beginning of Section \ref{sec:simulations} except the forgetting factor which is now set to $0.1$ (recall that as discussed earlier, the estimate is tighter when $\lambda$ is small).   Consider applying the estimator \eqref{eq:theta-inf2}-\eqref{eq:theta-sup2} with inputs $\underline{\xi} =c_\theta- \hat{r}_{\theta,m}$ and $\overline{\xi} =c_\theta+\hat{r}_{\theta,m}$  as defined in \eqref{eq:interval-estimate} and \eqref{eq:rhat} for $m\in \left\{5,N\right\}$. The outcome of this experiment is depicted in Figure \ref{fig:Estimates-LTV}. 
%As in the time-invariant case, the interval-valued estimate is tighter when $\lambda$ is small. 
For a value of $\lambda$ as small as $0.1$, the estimated interval appears to be very tight. Moreover, all values of the truncation horizon $m$ give almost the same performance in this case.  
}
%%%%%%%%%%%%%%%%%%%%%%%%
\begin{figure}[h!]%
\centering
\psfrag{N5}{\scriptsize$m=5$}
\psfrag{N}{\scriptsize$m=N$}
\psfrag{True}{\scriptsize True}
\psfrag{t}{\scriptsize Time}
\psfrag{theta}{\scriptsize Estimates $(\underline{p},\overline{p})$}
\includegraphics[width=.8\columnwidth,height=.14\paperheight]{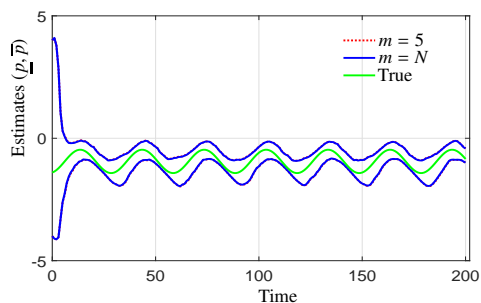}
\caption{Interval-valued parameter estimates (averaged over $100$ independent runs)  given by the estimator \eqref{eq:theta-inf2}-\eqref{eq:theta-sup2} on the time-varying example with  $\overline{\xi}=c'_\theta+\hat{r}'_{\theta,m}$, $\underline{\xi}=c'_\theta-\hat{r}'_{\theta,m}$ and $\hat{r}'_{\theta,m}$ as in \eqref{eq:rhat-2} for $m=5$ (dashed red) and $m=N$ (solid blue). 
The true time-varying parameter $\theta^\circ$ is in solid green.  }%
\label{fig:Estimates-LTV}%
\end{figure}
%%%%%%%%%%%%%%%%%%%					
%%%%%%%%%%%%%%%%%%%%%
\vspace{-.2cm}
\section{Conclusion}\label{sec:conclusion}
\noindent In this paper, we have presented a recursive interval-valued estimation framework for the identification  of linearly parametrized models. The main idea of the method is to carefully bound the error generated by a certain reference adaptive algorithm, for example the recursive least squares. However the smallest interval-valued estimator we discussed  turns out to be computationally costly to implement in an online scenario. We therefore turn to an alternative family of (over)-estimators which exhibits a trade-off between the achievable performance of and the price to pay for it in computational load.  Two cases have been studied: one where the to-be-estimated parameter vector is constant and a more general situation where it is possibly time-varying. In the first case, we further show that the estimated interval size can  be made monotonically decreasing. In the second, this monotonic property cannot be systematically achieved (as this depends on the  change rate of the parameters)  but the width of the estimated interval can be made very small by an appropriate design of the reference point-value identifier. For example, we have observed in simulation that when the reference identifier is the RLS algorithm, the performance of the estimator improves if the forgetting factor is small.  \\
Future work may concern the extension of the proposed interval-valued estimation framework to systems whose models are nonlinear in the parameters.

\bibliographystyle{abbrv}
%\bibliography{interval_est}

\end{document}